\newcommand{\ptrue}{p^{\mathrm{lim}}}
\newcommand{\pbc}{P^{\mathrm{BC}}}
\newcommand{\pperm}{P^{\mathrm{perm}}}
\newcommand{\plim}{P_{\mathrm{lim}}}
\newcommand{\pval}{P}
\newcommand{\bd}{\boldsymbol{d}}
\newcommand{\Wbm}{\bar{W}}
\newtheorem{theorem}{Theorem}[section]
\newtheorem{proposition}[theorem]{Proposition}
\newtheorem{lemma}[theorem]{Lemma}
\newtheorem{assumption}{Assumption}
\newtheorem{property}{Property}
\newtheorem{example}{Example}
\newif\ifarxiv
\title{Multiple testing with anytime-valid Monte Carlo p-values}
 \author{%
  Lasse Fischer\thanks{University of Bremen, Germany. Email: \texttt{fischer1@uni-bremen.de}} \and Timothy Barry\thanks{Harvard University, USA.  Email: \texttt{tbarry@hsph.harvard.edu} } \and Aaditya Ramdas\thanks{Carnegie Mellon University, USA. Email: \texttt{aramdas@cmu.edu} }
}
\begin{document}

\maketitle
\begin{abstract}
In contemporary problems involving  genetic or neuroimaging data,  thousands of hypotheses need to be tested. Due to their high power, and finite sample guarantees on type-I error under weak assumptions, Monte Carlo permutation tests are often considered as gold standard for these settings. However, the enormous computational effort required for (thousands of) permutation tests is a major burden. 
In this paper, we integrate recently constructed anytime-valid permutation p-values into a broad class of multiple testing procedures, including the Benjamini-Hochberg procedure. This allows to fully adapt the number of permutations to the underlying data and thus, for example, to the number of rejections made by the multiple testing procedure. Even though this data-adaptive stopping can induce dependencies between the p-values that violate the usual assumptions of the Benjamini-Hochberg procedure, we prove that our approach controls the false discovery rate under mild assumptions. Furthermore, our method provably decreases the required number of permutations substantially without compromising power. On a real genomics data set, our method reduced the computational time from more than three days to less than four minutes while increasing the number of rejections.

\end{abstract}

\tableofcontents

\section{Introduction}

In a classical Monte Carlo permutation test, we observe some test statistic $Y_0$, generate $B$ additional test statistics $Y_1,\ldots, Y_B$ and calculate the permutation p-value by 
\begin{align}\pperm_B=\frac{1+\sum_{t=1}^B \mathbbm{1}\{Y_t\geq Y_0\}}{1+B}.\label{eq:pperm}\end{align}
  In order to be able to reject the null hypothesis at level $\alpha$, we would need to generate at least $B\geq 1/\alpha -1$ permuted datasets with corresponding test statistics. This can already lead to considerable computational effort when the data generation process is demanding. However, it is significantly higher when $M$ hypotheses are tested instead of a single one. First, we need to perform a permutation test for each hypothesis, leading to a minimum number of $M(1/\alpha -1)$ permutations in total. Second, when multiple testing corrections are performed, each hypothesis is tested at a lower individual level than the overall level $\alpha$. For example, if the Bonferroni correction is used, each hypothesis is tested at level $\alpha/M$ leading to the requirement of at least $M(M/\alpha -1)\approx  M^2/\alpha$ permutations. Even if we use more powerful multiple testing procedures such as the Benjamini-Hochberg (BH) procedure \citep{benjamini1995controlling}, this lower bound for the total number of permutations usually remains. The problem is that $B$ needs to be chosen in a manner that protects against the worst case in which all p-values are large except for one that is then essentially tested at level $\alpha/M$. 
  
  In this paper, we consider applying multiple testing procedures to anytime-valid permutation p-values which allows to stop resampling early and reject (or accept) a hypothesis as soon as the corresponding p-value is rejected (or accepted) by the multiple testing procedure. In this way, we can adapt the number of permutations to the number of rejections and potentially save a lot of permutations. In Figure~\ref{fig:worst_bounds}, we compare the required number of permutations to ensure that enough samples are drawn in the worst case when using BH with the permutation p-value and our adaptive method, showing that our method reduces the linear bound $M/\alpha -1$ to a logarithmic one.

  \begin{figure}[h!]
\centering
\includegraphics[width=0.7\textwidth]{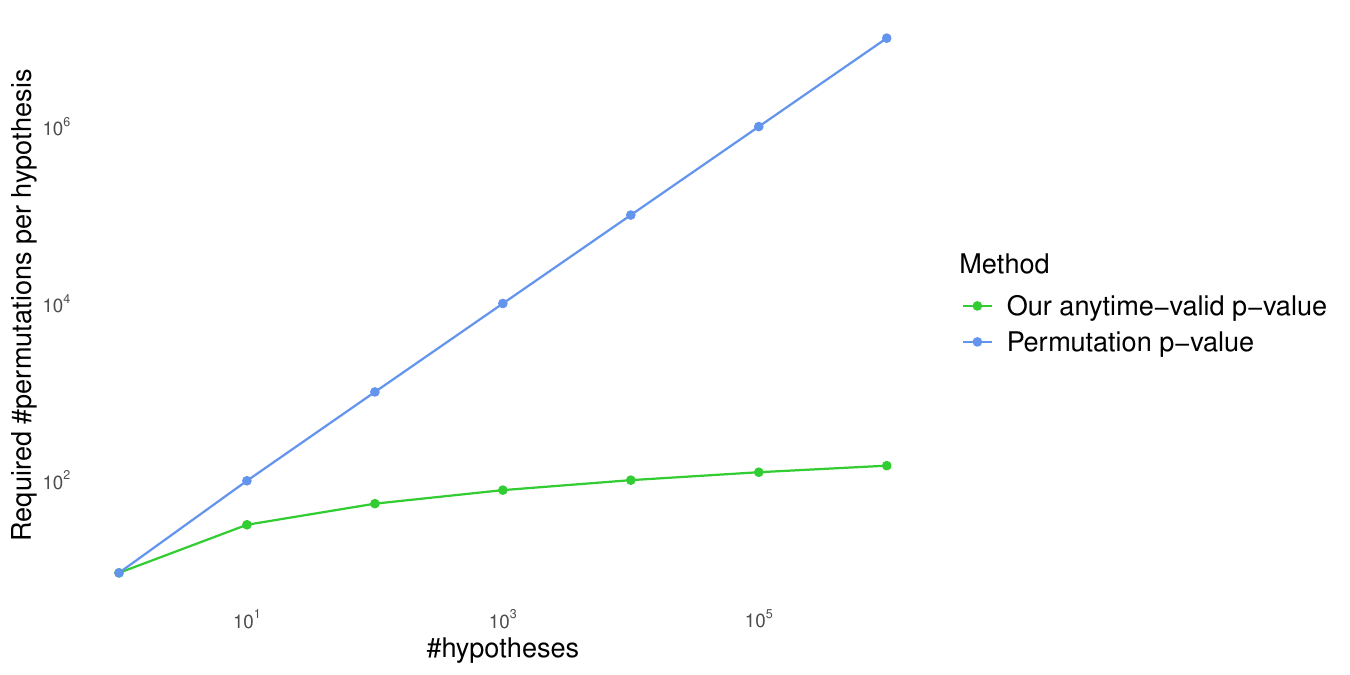}
\caption{Required number of permutations per hypothesis to ensure that enough samples are drawn in the worst case when using the BH procedure and our anytime-valid p-value ($\alpha=0.1$). While we need to draw at least $M/\alpha-1$ per hypothesis with the permutation p-value, we show that the average number of samples drawn per hypothesis scales logarithmically with $M$ (in the worst case) when using our adaptive method (see \eqref{eq:worst_bound} for $h=1$).  \label{fig:worst_bounds} }\end{figure}

Despite our terminology and notation focusing on \textit{permutation} tests, everything in this paper identically applies to other Monte Carlo methods such as \textit{conditional randomization tests} as well, which are also common in the causal inference literature, and have been applied to genetic settings under the so-called model-X assumption \citep{candes2018panning, bates2020causal, barry2021sceptre}.

There are many existing approaches considering sequential Monte Carlo tests for multiple testing --- in particular for the BH procedure \citep{jiang2012statistical, guo2008adaptive, zhang2019adaptive, sandve2011sequential, 
bancroft2013estimation,
gandy2014mmctest, gandy2016framework, gandy2017quickmmctest, pounds2011integrated}. These works can be broadly categorized into the following three approaches.
\begin{enumerate}
    \item The first idea is to use early stopping rules that ensure obtaining the same decisions as applying the multiple testing procedure to the permutation p-value $\pperm_B$ for a fixed $B\in \mathbb{N}$ with high probability \citep{jiang2012statistical, guo2008adaptive, zhang2019adaptive}.
    \item Another line of work provides bounds for the probability of obtaining a different decision than the limiting permutation p-value \citep{gandy2014mmctest, gandy2016framework, gandy2017quickmmctest}. Here, the main goal is not to reduce the number of permutations but to minimize the resampling risk \citep{fay2002designing}.
    \item In a third approach  \citep{sandve2011sequential, pounds2011integrated,bancroft2013estimation},  multiple testing procedures are applied to the sequential permutation p-value by \citet{besag1991sequential}.
\end{enumerate}

However, existing works in all three directions are limited. The first approach is conservative by nature, since it splits the error budget between the probability of obtaining the same decision as $\pperm_B$ and the type I error probability of $\pperm_B$ itself \citep{zhang2019adaptive}; the second approach does not save computational resources; and the third approach allows for early stopping under the null hypothesis only. To the best of our knowledge, the only existing method allowing to adapt the number of permutations to the number of rejections while guaranteeing false discovery rate (FDR) control, is the AMT algorithm by \citet{zhang2019adaptive}. They fix the maximum number of permutations $B\in \mathbb{N}$ in advance and consider the  generation process of the test statistics as a multi-armed bandit  problem, and their guarantee falls into the first category above.


\ifarxiv
\paragraph{Our contributions.}
\else 
\textbf{Our contributions.}\hspace{1cm}
\fi
We will follow a strategy similar to the third path, however, instead of the Besag-Clifford p-value, we consider the anytime-valid permutation p-values recently constructed by \citet{fischer2024sequential}. 
The latter framework encompasses the former as a special case, but is able to  reduce the number of permutations under both the null and the alternative, without compromising power or type-I error. With this, our approach allows to adapt the number of permutations to the number of rejections, which reduces the required computational time substantially, while providing provable error control. 

We particularly study the use of our recommended ``anytime-valid BC'' p-value with the BH procedure, yielding the following desirable results. 
\begin{enumerate}
    \item FDR control under PRDS of the limiting permutation p-values (Theorem~\ref{theo:PRDS}).
    \item Equivalence to the permutation p-value \eqref{eq:pperm} applied with a particular data-adaptive number of permutations $B$, showing that no power is impaired when using our anytime-valid method (Theorem~\ref{theo:magic}). 
    \item Upper bound for the average number of permutations required per hypothesis that scales logarithmically with the number of hypotheses $M$ (Proposition~\ref{prop:upper_bound}).
    \item  Substantial reduction of computational effort and good power in several simulated and real data experiments. For example, on a real genomics data set our method reduced the computational time compared to the classical permutation p-value \eqref{eq:pperm} from more than three days to about three minutes and thirty seconds while making more discoveries (Section~\ref{sec:real_data}).
\end{enumerate}

Furthermore, we provide results on error rate control for arbitrary multiple testing procedures and anytime-valid p-values. If the employed multiple testing procedure (which takes p-values as input and produces a rejection set as output) provides the desired error rate control under arbitrary dependence of the p-values, then it follows that it retains such validity with anytime-valid permutation p-values (Theorem~\ref{theo:FDR_arbitrary_dep}), allowing us to stop fully data-adaptively with no danger of error rate control. If a specific dependence structure is required to control the desired error rate, then some care regarding the stopping time and employed anytime-valid permutation p-value needs to be taken. For brevity, we restrict to the BH procedure in this case and prove FDR control under independent limiting permutation p-values (Proposition~\ref{prop:futility_ind}).

\ifarxiv
\paragraph{Paper outline.}
\else 
\textbf{Paper outline.}\hspace{1cm}
\fi
 In Section~\ref{sec:old_paper}, we recap the notion of anytime-valid permutation p-values and the anytime-valid BC method as particular instance. Afterward, we show how this general methodology can be used in multiple testing (Section~\ref{sec:con_cal}). The proposed technique is straightforward for all p-value based multiple testing procedures that do not require assumptions on the dependence structure. In Section~\ref{sec:bh}, we discuss the use of the BH procedure with the anytime-valid BC p-value; proving FDR control under weak assumptions, showing that no power is impaired compared to the classical approach and providing an upper bound for the number of permutations. 
In Section~\ref{sec:anytime_betting}, we show how the general testing by betting approach to construct anytime-valid permutation p-values, introduced by \citet{fischer2024sequential}, can be used in multiple testing.  
Finally, we demonstrate the application of the BH procedure with anytime-valid permutation p-values to simulated data and real data in Sections~\ref{sec:sim} and~\ref{sec:real_data}, respectively. All proofs are provided in Supplementary Material~\ref{appn:proofs}.


\section{Anytime-valid permutation p-values\label{sec:old_paper}}


In this section, we recap the notion of an anytime-valid permutation p-value \citep{fischer2024sequential, johari2022always, howard2021time} and introduce the particular instance that we will mainly focus on throughout this paper. 

Suppose we observed some data $X_0$ and have the ability to generate additional data $X_1,X_2,\ldots$, e.g., by permuting the treatment labels of $X_0$ in a treatment vs.\ control trial. Let $Y_i=S(X_i)$ for some test statistic $S$. We consider testing the null hypothesis 
\begin{align}H_0: Y_0,Y_1, \ldots \text{ are exchangeable,} \label{eq:H_0}\end{align}
where \textit{exchangeable} means that the joint distribution of the sequence does not change for any finite permutation of the test statistics. 

Note that this hypothesis implicitly assumes that we sample the permutations with replacement, since there would be an upper bound for the number of permutations otherwise. However, this is the usual state of affairs in practice, since remembering which permutations have already been drawn is computationally intensive and if the data size is large, not all permutations can be drawn anyway \citep{stark2018random}. Also, as implicitly contained in the formulation of the permutation p-value \eqref{eq:pperm}, we usually assume that larger values of $Y_0$ provide more evidence against the null hypothesis. However, all methods apply equally to left-tailed problems; and two-tailed tests can be obtained by combining a right-tailed and left-tailed test. 

The standard approach to this testing problem is the permutation p-value \eqref{eq:pperm}. In general, one would want to choose $B$ as large as possible, since the power increases with $B$. However, the computational effort can be too high for large $B$. In order to reduce the computational cost, we consider generating the test statistics $Y_1,Y_2,\ldots$ sequentially in this paper. At each step $t$, a test statistic $Y_t$ is generated, compared with $Y_0$, and then a p-value $\pval_t$ is calculated. The resulting process $(\pval_t)_{t\in \mathbb{N}}$ is called \emph{anytime-valid p-value} \citep{johari2022always, howard2021time}, if 
\begin{align}
    \mathbb{P}_{H_0}( \pval_{\tau} \leq \alpha)\leq \alpha  \text{ for all } \alpha \in [0,1], \label{eq:anytime_p-value_condition}
\end{align}
 where $\tau$ is an arbitrary stopping time. Hence, an anytime-valid p-value allows to stop the sampling process at any time, while providing valid type I error control. In particular, it is usually stopped as soon as a hypothesis can be rejected. However, if the evidence against the null hypothesis is weak, it can also be sensible to stop for futility (stop and accept $H_0$). We note two properties \citep{howard2021time} of anytime-valid p-values that will be important for us when applying them with multiple testing procedures.
\begin{enumerate}[label=(\Alph*)]
    \item If $(\pval_t)_{t\in \mathbb{N}}$ is an anytime-valid p-value, then $(\tilde{\pval}_t)_{t\in \mathbb{N}}$, where $\tilde{\pval}_t=\min_{s\leq t} \pval_s$, is an anytime-valid p-value as well. Hence, it can be assumed that all anytime-valid p-values $(\pval_t)_{t\in \mathbb{N}}$ are \emph{nonincreasing} in $t$. \label{bull:A}
    \item If $(\pval_t)_{t\in \mathbb{N}}$ is an anytime-valid p-value, it holds that $\mathbb{P}_{H_0}( \pval_{T} \leq \alpha)\leq \alpha$ for any data-dependent \emph{random time} $T$. This means that the stopping time $\tau$ for our anytime-valid p-value does not have to  be adapted to the filtration generated by $Y_t, t\geq 0$, but can depend on any (possibly data-dependent) information. In particular, this allows to stop based on the data of other hypotheses. We will still denote $\tau$ as stopping time in the following. \label{bull:B}
\end{enumerate}

In what follows, we present a simple --- but, as we shall see, powerful --- anytime-valid generalization of the Besag-Clifford p-value.
We will later (Section~\ref{sec:anytime_betting}) also introduce a general way to construct anytime-valid permutation p-values using the betting framework introduced by \citet{fischer2024sequential}. However, due to its simplicity, desirable behavior (Section~\ref{sec:bh}) and remarkable performance in our experiments (Sections~\ref{sec:sim} and \ref{sec:real_data}), the anytime-valid Besag-Clifford p-value is our proposed choice and will therefore be examined in more detail.


\ifarxiv
\paragraph{The anytime-valid Besag-Clifford p-value.}
\else 
\textbf{The anytime-valid Besag-Clifford p-value.}\hspace{1cm}
\fi
\citet{besag1991sequential} introduced a \emph{sequential} permutation p-value:
\begin{align}\pbc_{\gamma(h,B)}=\begin{cases}
            h/\gamma(h,B), & L_{\gamma(h,B)}=h \\
            (L_{\gamma(h,B)}+1)/(B+1), & \text{otherwise},
        \end{cases}\label{eq:Besag-Clifford}\end{align}
where $h$ is some predefined parameter, $L_B:=\sum_{t=1}^B \mathbbm{1}\{Y_t\geq Y_0\}$ is the random number of ``losses'' after $B$ permutations and $\gamma(h,B)=\min( \inf\{t\in \mathbb{N}:L_t=h\}, B)$ is a stopping time. In case of $B=\infty$, we just write $\gamma(h)$. The Besag-Clifford (BC) p-value only allows to stop at the particular time  $\gamma(h,B)$ and not at any other stopping time. Hence, it allows to stop early when  $h$ losses occurred before all $B$ permutations were sampled and therefore only saves computations under the null hypothesis, but not when the evidence against the null is strong. For this reason, it is not well-suited to adapt the number of permutations to the number of rejections when multiple hypotheses are considered. However, the BC p-value permits a simple anytime-valid  generalization which solves this issue.

For simplicity, assume $B=\infty$, meaning the Besag-Clifford method would only stop after observing $h$ losses (the case $B<\infty$ is also possible \citep{fischer2024sequential}, however, we do not gain much and the notation is more complicated). Suppose we plan to use the BC p-value $\pbc_{\gamma(h)}$, are currently at some time $t<\gamma(h)$, and since the evidence against $H_0$ already seems very strong, we would like to stop the process to save computational time. What (non-trivial) p-value could we report at this time while ensuring \eqref{eq:anytime_p-value_condition} is fulfilled? The answer is very simple, just report the p-value under the most conservative future, meaning if all upcoming test statistics are losses. Since we need to sample $h-L_{t}$ test statistics until we hit the stopping time $\gamma(h)$, this leads to the following generalization of the Besag-Clifford p-value
\begin{align}\pval_{t}^{\mathrm{avBC}}=\begin{cases}
            \frac{h}{t+h-L_{t}}, & t\leq \gamma(h) \\
            h/\gamma(h), & t> \gamma(h).
        \end{cases}
 \label{eq:avBC} \end{align}
Since $\pval_{\gamma(h)}^{\mathrm{avBC}}=\pbc_{\gamma(h)}$, its anytime-validity \eqref{eq:anytime_p-value_condition} follows immediately by the validity of the Besag-Clifford p-value at time $\gamma(h)$,
$$
\mathbb{P}_{H_0}(\pval_{\tau}^{\mathrm{avBC}}\leq \alpha) \leq \mathbb{P}_{H_0}(\pval_{\gamma(h)}^{\mathrm{avBC}}\leq \alpha)=\mathbb{P}_{H_0}(\pbc_{\gamma(h)}\leq \alpha) \leq \alpha,
$$
where $\tau$ is an arbitrary stopping time.
Apart from \citet{fischer2024sequential} mentioning the anytime-valid BC method as special instance of their general approach (see Section~\ref{sec:anytime_betting}), we have not seen it being exploited in the literature before. The simple idea of stopping early by assuming a conservative future was previously explored for the classical permutation p-value, but not found to be very useful \citep{hapfelmeier2023efficient, good2013permutation}. However, it is much more efficient for the BC method and particularly useful in the multiple testing setting, because it allows to adapt the stopping time to the individual significance level (see Section~\ref{sec:magic} for a general treatment; for now, we just illustrate this behavior via a simple example). 

\begin{example} Suppose $h=10$ and $B=999$. If $\alpha=0.01$, the anytime-valid generalizations of the BC method and the permutation p-value are equivalent. They can both stop for accepting the null hypothesis as soon as the number of losses equals $10$ or stop for rejecting the null if the number of losses after $990+\ell$ permutations is $\ell$, $\ell\in \{0,1,\ldots,9\}$. Now suppose $\alpha$ increases to $0.05$. Suddenly, the anytime-valid BC method allows to stop for rejection after $190+\ell$ permutations, if $\ell$ losses are observed, while at least $950$ permutations would need to be sampled such that a rejection can be obtained by the permutation p-value.
\end{example}

In case of $h=1$, \citet{fischer2024sequential} referred to the anytime-valid BC p-value as the \emph{aggressive strategy}, as it already stops after the first loss. It also requires the least possible number of permutations, which can be useful in scenarios with enormous computational demand, and therefore deserves a special mention.

\section{Multiple testing with anytime-valid permutation p-values\label{sec:con_cal}}

Suppose we have $M$ null hypotheses $H_0^1,\ldots, H_0^M$ of the form \eqref{eq:H_0}. 
We denote by $Y_0^i, Y_1^i,\ldots $ the sequence of test statistics for $H_0^i$ and by $L_B^i$ the random number of losses for hypothesis $H_0^i$ after $B$ permutations. Furthermore, let \[I_0 \subseteq \{1,\ldots,M\}\] be the index set of true hypotheses, $R\subseteq\{1,\ldots,M\}$ the set of rejected hypotheses and $V=I_0\cap R$ the set of falsely rejected hypotheses. Two of the most common error rates considered in multiple testing are the familywise error rate (FWER) and the FDR. The FWER is defined as probability of rejecting any true null hypothesis
\begin{align}
    \mathrm{FWER}:=\mathbb{P}(|V|>0). \label{eq:FWER}
\end{align}
The FDR is the expected proportion of falsely rejected hypotheses among all rejections 
\begin{align}
\mathrm{FDR}:=\mathbb{E}\left[\frac{|V|}{|R|\lor 1}\right]. \label{eq:FDR}
\end{align}
The aim is to control either the FWER or FDR at some prespecified level $\alpha\in (0,1)$. Although we focus on FWER and FDR in this paper for concreteness, our approach also works with all other common error metrics (false discovery proportion tail probabilities, k-FWER, etc.).
Control of the FWER implies control of the FDR \citep{benjamini1995controlling} and is therefore more conservative. Hence, FDR controlling procedures often lead to more rejections and are more appropriate in large-scale exploratory hypothesis testing. In non-exploratory validation studies, FWER control is usually the norm.


\ifarxiv
\paragraph{Our general multiple testing approach.}
\else 
\textbf{Our general multiple testing approach.}\hspace{1cm}
\fi
Our general algorithm requires two ingredients, an anytime-valid permutation p-value $(\pval_t^i)_{t\in \mathbb{N}}$ \eqref{eq:anytime_p-value_condition}, like the anytime-valid BC p-value \eqref{eq:avBC}, for each hypothesis $H_0^i$, $i\in \{1,\ldots,M\}$, and a monotone multiple testing procedure $\bd=(d_i)_{i\in \{1,\ldots,M\}}$. A multiple testing procedure $\bd$ is called monotone, if each $d_i:[0,1]^M\to \{0,1\}$, which maps p-values for $H_0^1,\ldots, H_0^M$ to the decision for $H_0^i$ ($1=\text{reject; } 0=\text{accept}$), is nonincreasing in all coordinates. Hence, if some of the p-values become smaller, all previous rejections remain and additional rejections might be obtained. 

Recall the two properties of anytime-valid p-values from Section~\ref{sec:old_paper}:~\ref{bull:A},  $(\pval_t^i)_{t\in \mathbb{N}}$ is nonincreasing in $t$, and~\ref{bull:B}, $(\pval_t^i)_{t\in \mathbb{N}}$ is valid at all data-dependent random times. Due to~\ref{bull:B}, we can stop the sampling process for $H_0^i$ based on all available data, and particularly as soon as $H_0^i$ can be rejected by $\bd$, without violating the validity of $(\pval_t^i)_{t\in \mathbb{N}}$. In addition, due to~\ref{bull:A} and the monotonicity of $\bd$, we can already report the rejection of $H_0^i$ at that time, as the decision won't change if we continue testing. Our general method is summarized in Algorithm~\ref{alg:general}. It should be noted that in its most general case, error control is only provided if the multiple testing procedure works under arbitrary dependence. This is captured in the following theorem, whose proof follows immediately from the explanations above.

\begin{theorem}\label{theo:FDR_arbitrary_dep}
If $\bd$ is a monotone multiple testing procedure with FWER (or FDR) control under arbitrary dependence of the p-values, then Algorithm~\ref{alg:general} controls the FWER (or FDR)  at level $\alpha$.
\end{theorem}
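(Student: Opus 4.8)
The plan is to collapse the entire sequential procedure of Algorithm~\ref{alg:general} onto one ordinary (non-sequential) application of $\bd$, and then quote the assumed error control of $\bd$ under arbitrary dependence. Write $\tau_i$ for the data-dependent time at which the algorithm stops sampling for $H_0^i$, and set $P_i^\star:=\pval_{\tau_i}^i$. I would prove two statements: (i) the rejection set $R$ output by Algorithm~\ref{alg:general} equals $R^\star:=\{i:d_i(P_1^\star,\dots,P_M^\star)=1\}$; and (ii) for every $i\in I_0$, the coordinate $P_i^\star$ is superuniform under $H_0^i$, so that $(P_1^\star,\dots,P_M^\star)$ is a valid, arbitrarily dependent p-value vector for $\bd$. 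Since the FWER and the FDR depend on the realization only through the pair $(I_0,R)$, (i) and (ii) together give the conclusion.

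For (i) the tools are property~\ref{bull:A} (each path $\pval_t^i$ may be taken nonincreasing) together with monotonicity of $\bd$. For "$R\subseteq R^\star$": if the algorithm rejects $H_0^i$, it does so at time $\tau_i$ because $d_i$ evaluated at the p-value vector $q$ available at that instant equals $1$; the terminal vector $(P_1^\star,\dots,P_M^\star)$ is coordinatewise $\le q$ (a hypothesis already stopped by then has its value frozen and hence equal to its terminal value, while a still-running hypothesis has a not-yet-decreased, hence larger, value, using property~\ref{bull:A}), so monotonicity of $d_i$ gives $d_i(P_1^\star,\dots,P_M^\star)\ge d_i(q)=1$, i.e.\ $i\in R^\star$. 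For "$R^\star\subseteq R$": when the algorithm terminates the current vector equals $(P_1^\star,\dots,P_M^\star)$, and by its termination rule $d_i$ equals $0$ on this vector for every hypothesis not already rejected; hence $d_i(P_1^\star,\dots,P_M^\star)=1$ forces $i\in R$.

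I expect this equivalence to be the main obstacle. The subtlety is that a hypothesis is stopped on the basis of the \emph{interim} p-values of the other, still-running hypotheses rather than their final values, and one must also ensure the algorithm does not finalize an acceptance prematurely; monotonicity of both the p-value paths and of $\bd$ is exactly what makes this "lock-in" argument go through, which is why the algorithm is restricted to monotone procedures. (For the FWER alone only the inclusion $R\subseteq R^\star$ is needed; the reverse inclusion is used to handle the non-monotone dependence of the FDR functional on $R$.)

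For (ii) the tool is property~\ref{bull:B}: $\tau_i$ is a random time that may depend on \emph{all} observed information, including the test statistics and interim $\bd$-decisions of the other hypotheses, and anytime-validity of $(\pval_t^i)_{t\in\mathbb{N}}$ in the sense of \eqref{eq:anytime_p-value_condition} then yields $\mathbb{P}_{H_0^i}(P_i^\star\le\alpha)\le\alpha$ for all $\alpha\in[0,1]$. This is purely a marginal statement for each true null and imposes no restriction whatsoever on the joint law of $(P_1^\star,\dots,P_M^\star)$, which is precisely the setting covered by the assumption that $\bd$ controls the FWER (or the FDR) under arbitrary dependence of its input p-values. Combining with (i): in the FWER case $\mathrm{FWER}=\mathbb{P}(|V|>0)=\mathbb{P}(I_0\cap R^\star\neq\emptyset)\le\alpha$, and in the FDR case $\mathrm{FDR}=\EE\big[|I_0\cap R^\star|/(|R^\star|\lor 1)\big]\le\alpha$, which is the claim.
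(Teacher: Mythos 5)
Your proposal is correct and follows essentially the same route as the paper: the paper's proof is precisely the two observations preceding the theorem, namely that property~\ref{bull:B} makes each stopped p-value $\pval_{\tau_i}^i$ marginally superuniform at the data-dependent time $\tau_i$ (your step (ii)), and that property~\ref{bull:A} plus monotonicity of $\bd$ lets the incrementally reported rejections be identified with $\bd$ applied to the terminal stopped vector (your step (i)); the arbitrary-dependence assumption then absorbs whatever joint law the stopping induces. Your write-up is in fact more careful than the paper's one-line proof, the only caveat being that your ``$R^\star\subseteq R$'' step reads the termination rule as re-evaluating $d_i$ on the terminal vector for every non-rejected hypothesis, whereas the pseudocode never re-checks hypotheses removed from $A$ by a futility stop $\mathcal{S}^i$ --- a reading the paper itself implicitly adopts, and one that only matters for the FDR (not FWER) direction, exactly as you note.
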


For FWER control, the class of monotone multiple testing procedures under arbitrary dependence includes all Bonferroni-based procedures such as the Bonferroni-Holm \citep{holm1979simple}, the sequentially rejective graphical multiple testing procedure \citep{bretz2009graphical} and the fixed sequence method \citep{MHL}. For FDR control, the Benjamini-Yekuteli  procedure \citep{benjamini2001control} is  most common. In order to control the desired error rate under arbitrary dependence, the procedures usually need to protect against the worst case distribution. Therefore, improvements can be derived under additional assumptions. A typical assumption is positive regression dependence on a subset (PRDS), under which Hochbergs's \citep{hochberg1988sharper} and Hommel's \citep{hommel1988stagewise} procedure provide uniform improvements of Holm's procedure and the famous Benjamini-Hochberg (BH) procedure \citep{benjamini1995controlling} uniformly improves the Benjamini-Yekuteli procedure. However, if PRDS is required, caution with respect to the choice of anytime-valid permutation p-value and stopping time needs to be taken. We address this in the next section by the example of the BH procedure with anytime-valid BC p-values.



\ifarxiv
\begin{algorithm}
\caption{General multiple testing with anytime-valid permutation p-values}\label{alg:general}
 \textbf{Input:} Overall significance level $\alpha$, anytime-valid permutation p-values $(\pval_t^i)_{t\in \mathbb{N}}$, monotone p-value based multiple testing procedure $\bd=(d_i)_{i\in \{1,\ldots,M\}}$ and data-dependent stopping rules $\mathcal{S}^i$.\\ 
 \textbf{Output:} Stopping times $\tau_1,\ldots, \tau_M$ and rejection set $R$.
\begin{algorithmic}[1]
\State $A=\{1,\ldots,M\}$
\State $R=\emptyset$
\For{$t=1,2,\ldots$}
\For{$i\in A$} 
\State Check whether $H_0^i$ can be rejected by $\bd$.
\If{$d_i(P_{\min(t,\tau_1)}^1,\ldots,P_{\min(t,\tau_M)}^M)=1$}
\State $R=R\cup \{i\}$
\State $A=A\setminus \{i\}$
\State $\tau_i=t$
\EndIf
\If{$\mathcal{S}^i(\mathrm{data})=\mathrm{stop}$}
\State $A=A\setminus \{i\}$
\State $\tau_i=t$
\EndIf 
\EndFor
\If{$A=\emptyset$}
\State \Return $\tau_1,\ldots, \tau_M$, $R$
\EndIf
\EndFor
\end{algorithmic}
\end{algorithm}
\else
\begin{algorithm}[H]
\caption{General multiple testing with anytime-valid permutation p-values}\label{alg:general}
\KwIn{Overall significance level $\alpha$, anytime-valid permutation p-values $(\pval_t^i)_{t\in \mathbb{N}}$, monotone p-value based multiple testing procedure $\bd=(d_i)_{i\in \{1,\ldots,M\}}$, and data-dependent stopping rules $\mathcal{S}^i$.}
\KwOut{Stopping times $\tau_1,\ldots,\tau_M$ and rejection set $R$.}

$A \gets \{1,\ldots,M\}$\;
$R \gets \emptyset$\;

\For{$t \gets 1,2,\ldots$}{
  \For{$i \in A$}{
    Check whether $H_0^i$ can be rejected by $\bd$\;
    \If{$d_i(P_{\min(t,\tau_1)}^1,\ldots,P_{\min(t,\tau_M)}^M)=1$}{
      $R \gets R \cup \{i\}$\;
      $A \gets A \setminus \{i\}$\;
      $\tau_i \gets t$\;
    }
    \If{$\mathcal{S}^i(\mathrm{data}) = \mathrm{stop}$}{
      $A \gets A \setminus \{i\}$\;
      $\tau_i \gets t$\;
    }
  }
  \If{$A=\emptyset$}{
    \Return{$\tau_1,\ldots,\tau_M$, $R$}\;
  }
}
\end{algorithm}
\fi

\section{The Benjamini-Hochberg procedure with anytime-valid BC p-values\label{sec:bh}}

The Benjamini-Hochberg (BH) procedure \citep{benjamini1995controlling} rejects all hypotheses $H_0^i$ with p-value 
\begin{align}
\pval^i\leq m^* \alpha /M, \quad \text{where }m^*=\max\left\{ m\in \{1,\ldots,M\}: \sum_{i=1}^M \mathbbm{1}\{\pval^i\leq m\alpha /M\}\geq m \right\} \label{eq:threshold_BH}
\end{align}
with the convention $\max(\emptyset)=0$. In the following we write $m^*_t$ for the BH threshold at time $t\in \mathbb{N}$ when the p-values in \eqref{eq:threshold_BH} are replaced by anytime-valid p-values at time $t$. The BH procedure controls the FDR when the p-values are positive regression dependent on a subset (PRDS). In order to define PRDS we need the notion of an increasing set. A set $D\subseteq \mathbb{R}^M$ is increasing, if $z\in D$ implies $y\in D$ for all $y\geq z$.  

\begin{property}[PRDS \citep{benjamini2001control, finner2009false}]
    A random vector of p-values $\boldsymbol{\pval}$ is weakly PRDS on $I_0$ if for any null index $i \in I_0$ and increasing set $D\subseteq \mathbb{R}^M$, the function $x\mapsto \mathbb{P}(\boldsymbol{\pval}\in D 
\mid \pval^i \leq x)$ is nondecreasing in $x$ on $[0,1]$. We call $\boldsymbol{\pval}$ strongly PRDS on $I_0$, if $``\pval^i \leq x"$ is replaced with $``\pval^i = x"$. Furthermore, we denote $\boldsymbol{\pval}$ as independent on $I_0$, if $\pval^i$ is stochastically independent of the random vector $(\pval^1,\ldots,\pval^{i-1}, \pval^{i+1}, \ldots, \pval^{M})^T$ for all $i\in I_0$. \label{property:PRDS}
\end{property}

Strong PRDS implies weak PRDS, but for controlling the FDR with the BH procedure, weak PRDS on $I_0$ is sufficient \citep{finner2009false}. Note that the difference between these two PRDS notions is very minor and assuming strong PRDS instead of weak PRDS should not make a difference for most applications. However, differentiating between those two notions facilitates our proofs. When we just write PRDS in the remainder of this paper, we always mean weak PRDS. 
For example, the PRDS condition holds when $\boldsymbol{\pval}$ is independent on $I_0$. However, it also holds when there is some kind of positive dependence, which is an appropriate assumption in many trials. 


\subsection{FDR control with anytime-valid BC p-values under PRDS}
When PRDS is required, one needs to be careful with choosing the stopping time, as it could possibly induce some kind of negative dependence. For example, suppose we stop sampling for hypothesis $H_0^i$ as soon as it can be rejected by BH, meaning $\pval_{t}^i\leq m_t^{*} \alpha/M$. The smaller the other p-values, the larger $m_t^*$ and the sooner we could stop. This potentially induces some negative dependence between the p-values, even if the data used for calculating the different p-values is independent. 

In the following we prove that the BH procedure applied with the anytime-valid BC p-values controls the FDR under mild assumptions. First, we need the following lemma, showing that stopping early for rejection with the BH procedure cannot inflate the FDR. This is very important as it allows to adapt the number of permutations to the number of rejections.


\begin{lemma}\label{lemma:fast_stop}
  Let $\tau_i'$ be a stopping time for each $i\in \{1,\ldots,M\}$ such that the stopped anytime-valid p-values  $\pval_{\tau_1'}^1,\ldots,\pval_{\tau_M'}^M$ are PRDS and define \begin{align}\tau_i=\inf\{t\geq 1:\tau_i'=t \text{ or } \pval^i_t\leq m_t^* \alpha /M\}.\label{eq:stopping_for_rejection}\end{align} Then the BH procedure applied on $\pval_{\tau_1}^1,\ldots,\pval_{\tau_M}^M$ rejects the same hypotheses as if applied on $\pval_{\tau'_1}^1,\ldots,\pval_{\tau'_M}^M$ and therefore controls the FDR.  
\end{lemma}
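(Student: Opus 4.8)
The plan is to show that the two collections of stopped p-values, $\pval_{\tau_1}^1,\ldots,\pval_{\tau_M}^M$ and $\pval_{\tau'_1}^1,\ldots,\pval_{\tau'_M}^M$, produce exactly the same BH rejection set; the FDR statement then follows because by hypothesis $\pval_{\tau'_1}^1,\ldots,\pval_{\tau'_M}^M$ are PRDS, so BH controls the FDR on that vector, and equal rejection sets means equal $|V|$ and $|R|$ pathwise, hence equal FDR. So the whole content is the combinatorial equivalence of the rejection sets.

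First I would observe that $\tau_i \leq \tau'_i$ for every $i$ (the infimum in \eqref{eq:stopping_for_rejection} is over a superset of the stopping event defining $\tau'_i$), and since anytime-valid p-values are nonincreasing in $t$ (property~\ref{bull:A}), $\pval^i_{\tau_i}\geq \pval^i_{\tau'_i}$ coordinatewise. By monotonicity of the BH procedure, every hypothesis rejected when BH is run on the larger vector $(\pval^i_{\tau_i})_i$ is also rejected when run on the smaller vector $(\pval^i_{\tau'_i})_i$; that gives one inclusion $R_\tau \subseteq R_{\tau'}$ for free. The reverse inclusion is the crux.

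For the reverse direction, suppose $H_0^j$ is rejected by BH applied to $(\pval^i_{\tau'_i})_i$, i.e.\ $\pval^j_{\tau'_j}\leq m^*\alpha/M$ where $m^*$ is the BH count for that vector. I want to argue $j$ is also rejected on $(\pval^i_{\tau_i})_i$. The key point is the self-referential structure of the definition \eqref{eq:stopping_for_rejection}: $\tau_j$ is the first time $t$ at which either $\tau'_j = t$ or $\pval^j_t \leq m^*_t \alpha/M$, where $m^*_t$ is itself the BH threshold computed from the \emph{running} stopped p-values $\pval^i_{\min(t,\tau_i)}$. I would run the argument along the time axis: consider the process up to time $T := \max_i \tau_i$ and show by an inductive/monotonicity argument that at the moment a hypothesis gets rejected in the $\tau'$-world, the running BH count $m^*_t$ in the $\tau$-world has already grown enough that the stopping-for-rejection clause fires — so $\tau_j$ occurs no later and with $\pval^j_{\tau_j}$ still below the relevant threshold. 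Concretely, I expect to show that for all $t$, the set of hypotheses rejected by the running BH rule in the $\tau$-world at time $t$ is nondecreasing in $t$ and that at $t=T$ it coincides with $R_{\tau'}$; the monotonicity of $\bd$ ensures that once $\pval^i_t$ dips below $m^*_t\alpha/M$ it stays rejected (since $\tau_i$ freezes that p-value and $m^*_s$ only increases as more p-values are frozen at small values).

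The main obstacle is exactly this circularity: $m^*_t$ depends on which hypotheses have already stopped, which depends on $m^*_s$ for $s<t$. I would handle it by a careful induction on $t$ (or equivalently a fixed-point argument), showing the running rejection set in the $\tau$-world increases monotonically and, at the final time, equals the BH rejection set of the fully-stopped $\tau'$-vector. One clean way to package this: let $R_\infty$ denote $R_{\tau'}$, note that if every $i\in R_\infty$ had its p-value frozen at $\pval^i_{\tau'_i}$ then the BH count is $\geq |R_\infty|$, and then verify that the stopping times $\tau_i$ for $i\in R_\infty$ all fire at or before $\tau'_i$ precisely because the running count reaches $|R_\infty|$ — using that the non-rejected hypotheses $i\notin R_\infty$ satisfy $\pval^i_{\tau_i}=\pval^i_{\tau'_i}$ (their stopping rule never triggered the rejection clause, or triggered it simultaneously), so the $\tau$-world p-value vector is pointwise $\geq$ the $\tau'$-world one yet still admits $R_\infty$ as a valid BH rejection set. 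Closing this loop rigorously — confirming no hypothesis in $R_\infty$ is "left behind" and none outside $R_\infty$ sneaks in — is where the real work lies.
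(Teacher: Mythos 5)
Your proposal is correct and rests on exactly the same two facts the paper uses — that each $\pval_t^i$ is nonincreasing in $t$ while the running BH count $m_t^*$ is nondecreasing, so a hypothesis that enters the rejection region stays there and freezing its p-value early changes nothing. The paper compresses this into a single sentence, whereas you spell out the closing set-equality argument (non-rejected hypotheses satisfy $\tau_i=\tau_i'$, and the frozen p-values of rejected ones still lie below the final threshold, so the BH counts coincide); that elaboration is sound and matches the intended reasoning.
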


Before we state our theorem on FDR control of BH with anytime-valid BC p-values, it should be noted that we cannot hope to prove FDR control under arbitrary assumptions, since we cannot expect our anytime-valid BC p-values to be PRDS if the data for the hypotheses do not exhibit some kind of non-negative dependence. To measure the dependency between the data, we use the \textit{limiting permutation p-values} $\plim^i=\lim_{t\to \infty} L_t^i/t$, $i\in \{1,\ldots,M\}$. The limiting permutation p-values can be interpreted as the ideal p-values we would use, if we could draw an infinite number of permutations. Furthermore, the limiting permutation p-values only depend on the data and not on the randomness induced by the generated test statistics. The following theorem proves FDR control of BH with the anytime-valid BC method, if we stop sampling for $H_0^i$ after $h$ losses are observed ($L_t^i=h$) or when $H_0^i$ can be rejected ($h/(t+h-L_t)\leq m_t^* \alpha /M, L_t<h$), whichever happens first. 

\begin{theorem}\label{theo:PRDS}
    Suppose the permuted samples are generated independently for all hypotheses and the vector of limiting permutation p-values $\boldsymbol{\pval}_{\mathrm{lim}}=(\plim^1,\ldots,\plim^M)$ is strongly PRDS on $I_0$. Then the anytime-valid BC p-values  $\boldsymbol{\pval}_{\boldsymbol{\gamma}(h)}^{\mathrm{avBC}}=(\pval_{\gamma_1(h)}^{\mathrm{avBC},1},\ldots,\pval_{\gamma_M(h)}^{\mathrm{avBC},M})$, where $\gamma_i(h)=\inf\{t\geq 1: L_t^i=h\}$,  are weakly PRDS on $I_0$. In particular, applying BH to $\boldsymbol{\pval}_{\boldsymbol{\tau}}^{\mathrm{avBC}}=(\pval_{\tau_1}^{\mathrm{avBC},1},\ldots,\pval_{\tau_M}^{\mathrm{avBC},M})$, where $\tau_i$ is defined as in \eqref{eq:stopping_for_rejection} for $\tau_i'=\gamma_i(h)$, controls the FDR.
\end{theorem}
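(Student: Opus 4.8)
The plan is to reduce the claim to two pieces: (1) the ``raw'' anytime-valid BC p-values evaluated at the observation times $\gamma_i(h)$ form a weakly PRDS vector on $I_0$, and (2) the extra stopping for rejection in \eqref{eq:stopping_for_rejection} preserves FDR control; the second piece is exactly Lemma~\ref{lemma:fast_stop}, so the real work is (1). For (1), the key observation is that $\pval_{\gamma_i(h)}^{\mathrm{avBC},i}=\pbc_{\gamma_i(h)}^i=h/\gamma_i(h)$, and $\gamma_i(h)=\inf\{t\geq 1: L_t^i=h\}$ is the waiting time until the $h$-th ``loss'' for hypothesis $i$. So the whole vector $\boldsymbol{\pval}_{\boldsymbol{\gamma}(h)}^{\mathrm{avBC}}$ is a deterministic coordinatewise-decreasing function of $(\gamma_1(h),\ldots,\gamma_M(h))$, and it suffices to show that, conditional on the data (hence conditional on $\boldsymbol{\pval}_{\mathrm{lim}}$), the vector of waiting times has the right positive-dependence structure, and then integrate out over $\boldsymbol{\pval}_{\mathrm{lim}}$ using its strong PRDS property.

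The core step I would carry out is a conditioning argument. Because the permuted samples are generated independently across hypotheses, conditional on $\boldsymbol{\pval}_{\mathrm{lim}}=\boldsymbol{p}$ the Monte Carlo draws $(Y_t^i)_t$ are independent across $i$; each $\mathbbm{1}\{Y_t^i\geq Y_0^i\}$ is Bernoulli$(p^i)$ i.i.d.\ in $t$, so $\gamma_i(h)$ is a negative-binomial waiting time with parameter $p^i$, and $h/\gamma_i(h)=\pval_{\gamma_i(h)}^{\mathrm{avBC},i}$ is stochastically increasing in $p^i$ (a larger ``loss probability'' $p^i$ makes losses arrive sooner, hence $\gamma_i(h)$ smaller, hence the p-value larger). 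Conditionally the coordinates are independent, so for a fixed null index $i\in I_0$ and any increasing set $D$,
\begin{align}
\mathbb{P}\!\left(\boldsymbol{\pval}_{\boldsymbol{\gamma}(h)}^{\mathrm{avBC}}\in D \,\middle|\, \plim^i=x\right)
&=\mathbb{E}\!\left[\,g_i\!\left(x,\,(\plim^j)_{j\neq i}\right)\,\middle|\,\plim^i=x\right],
\end{align}
where $g_i(x,\boldsymbol{z})=\mathbb{P}(\boldsymbol{\pval}_{\boldsymbol{\gamma}(h)}^{\mathrm{avBC}}\in D\mid \boldsymbol{\pval}_{\mathrm{lim}}=(x,\boldsymbol{z}))$ is nondecreasing in each argument: nondecreasing in $x$ because the $i$-th p-value is stochastically increasing in $x$ and $D$ is increasing, and nondecreasing in each $z_j$ for the same reason in coordinate $j$. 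Since $\boldsymbol{\pval}_{\mathrm{lim}}$ is strongly PRDS on $I_0$, the conditional law of $(\plim^j)_{j\neq i}$ given $\plim^i=x$ is stochastically nondecreasing in $x$; composing with the coordinatewise-nondecreasing $g_i$ gives that $x\mapsto \mathbb{P}(\boldsymbol{\pval}_{\boldsymbol{\gamma}(h)}^{\mathrm{avBC}}\in D\mid \plim^i=x)$ is nondecreasing, i.e.\ weak PRDS on $I_0$ (indeed strong PRDS, which then implies weak). Finally, feeding $\boldsymbol{\pval}_{\boldsymbol{\gamma}(h)}^{\mathrm{avBC}}$ (with $\tau_i'=\gamma_i(h)$) into Lemma~\ref{lemma:fast_stop} with the stopping-for-rejection times $\tau_i$ of \eqref{eq:stopping_for_rejection} yields FDR control of BH applied to $\boldsymbol{\pval}_{\boldsymbol{\tau}}^{\mathrm{avBC}}$.

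The main obstacle is making the "stochastic monotonicity of the conditional distribution" step fully rigorous: I need that strong PRDS of $\boldsymbol{\pval}_{\mathrm{lim}}$ on $I_0$ (defined via increasing sets $D\subseteq\mathbb{R}^M$ and the conditioning event $\plim^i=x$) actually delivers that $\mathbb{E}[g_i(x,(\plim^j)_{j\neq i})\mid \plim^i=x]$ is nondecreasing in $x$ whenever $g_i$ is bounded and nondecreasing in the remaining coordinates — this is the standard equivalence between PRDS-type positive dependence and stochastic monotonicity of regular conditional distributions, but one must be careful that the "increasing set" in the definition is in $\mathbb{R}^M$ (including the $i$-th coordinate) while here $g_i$ only depends on the coordinates $j\neq i$; handling the $i$-th coordinate (constant on the conditioning event) and the measure-zero conditioning set $\{\plim^i=x\}$ is where I'd expect to spend care, likely invoking a disintegration / regular conditional probability argument and possibly an approximation of the increasing set $D$ in $\boldsymbol{\pval}$-space by cylinder sets. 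A secondary, easier check is the elementary coupling that $\gamma_i(h)$ (negative binomial waiting time) is stochastically decreasing in $p^i$, hence $h/\gamma_i(h)$ stochastically increasing; this is a one-line coupling but should be stated.
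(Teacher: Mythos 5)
Your high-level decomposition matches the paper's: reduce to PRDS of the p-values stopped at the futility times $\gamma_i(h)$, then invoke Lemma~\ref{lemma:fast_stop} for the stop-for-rejection. The conditional-independence structure you exploit (given $\boldsymbol{\pval}_{\mathrm{lim}}$, the coordinates are independent and each $\pval_{\gamma_j(h)}^{\mathrm{avBC},j}$ is stochastically increasing in $\plim^j$) is exactly the paper's ``Claim 2.'' But there is a genuine gap in what you actually prove: your displayed quantity is $\mathbb{P}(\boldsymbol{\pval}_{\boldsymbol{\gamma}(h)}^{\mathrm{avBC}}\in D\mid \plim^i=x)$, i.e.\ you establish positive dependence of the p-value vector on the \emph{latent} limiting p-value $\plim^i$. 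Property~\ref{property:PRDS}, which is what the BH/FDR argument consumes, requires monotonicity of $x\mapsto \mathbb{P}(\boldsymbol{\pval}\in D\mid \pval^i\leq x)$ where $\pval^i$ is the \emph{observed} $i$-th coordinate of the vector being fed to BH, here $\pval_{\gamma_i(h)}^{\mathrm{avBC},i}$. These are different conditioning variables, and passing from one to the other is the substantive step, not a formality: it is a Bayesian inversion. One must show that conditioning on the decreasing event $\{\pval_{\gamma_i(h)}^{\mathrm{avBC},i}\leq x\}=\{L^i_{t_x}\leq h-1\}$, $t_x=\lceil h/x\rceil-1$, makes the posterior of $\plim^i$ stochastically smaller as $x$ decreases. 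Your one-line coupling (``larger $p^i$ gives smaller $\gamma_i(h)$'') gives the forward direction only; the reverse posterior monotonicity does not follow from it automatically and requires the null marginal of $\plim^i$.

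The paper supplies exactly this missing piece as its ``Claim 1'': using De Finetti, $L^i_{t_x}\mid \plim^i=p\sim\mathrm{Bin}(t_x,p)$ with $\plim^i$ uniform under $H_0^i$, and an explicit computation gives $\mathbb{P}(\plim^i\leq p^*\mid \pval^{\mathrm{avBC},i}_{\gamma_i(h)}\leq x)=\tfrac{1}{h}\sum_{\ell=0}^{h-1}\bigl(1-\mathrm{Bin}(\ell;t_x+1,p^*)\bigr)$, which is decreasing in $x$. It then chains Claim 1, the strong PRDS assumption on $\boldsymbol{\pval}_{\mathrm{lim}}$, and your Claim-2-type observation via a transitivity lemma for PRDS (Lemma~\ref{lemma:PRDS}: if $Y$ is weakly PRDS on $X$, $Z$ is strongly PRDS on $Y$, and $Z\independent X\mid Y$, then $Z$ is weakly PRDS on $X$), applied twice with the conditional independences coming from the independent sampling across hypotheses. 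Incidentally, this chaining device also sidesteps the disintegration/approximation worries you flag at the end, since everything is phrased through increasing sets and conditional independence rather than regular conditional distributions. To complete your proof you would need to add Claim 1 (or an equivalent posterior-monotonicity argument) and restate your conclusion with the conditioning on $\pval^{\mathrm{avBC},i}_{\gamma_i(h)}\leq x$.
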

Let us briefly make some practical notes on the assumptions of Theorem~\ref{theo:PRDS}. 

First, it is required that the limiting permutation p-values are \textit{strongly} PRDS, although BH generally works under weak PRDS. This might be an artifact of our proof technique. However, as mentioned before, the two assumptions do not make a big difference for practice. In particular, they are usually not proven but just assumed when there is no clear evidence against them. We see no specific use cases in which weak PRDS is a reasonable assumption but strong PRDS is not. 

Second, it is assumed that the permutations are sampled independently for the hypotheses. In practice, this is often not the case for computational reasons; e.g., by permuting the treatment indicators once, we can generate new data sets for all responses/hypotheses simultaneously. The dependence induced by such sampling mechanisms is difficult to track mathematically, which is why we focused on independent mechanisms. However, from a practical point of view, we see no reason why such a dependent mechanism should violate the PRDS condition. Also, the same consideration applies for the classical permutation p-values \eqref{eq:pperm} and we are not aware of any result that proves their PRDS under dependent sampling schemes.

In summary, we believe that Theorem~\ref{theo:PRDS} provides sufficient evidence for the validity of BH with the anytime-valid BC p-value, so that we can confidently replace the classical permutation p-value with the anytime-valid BC p-value in practice. In the next section, we show why this is a reasonable replacement, which is further supported by our simulations and real data analyses in Sections~\ref{sec:sim} and \ref{sec:real_data}.

\subsection{The synergy of anytime-valid BC p-values and BH\label{sec:magic}}

Concerns about using anytime-valid tests often relate to a loss of power and the possibility that the sample size could become very large in the event of unfavorable data constellations. In this section, we show that neither of these concerns apply to our anytime-valid BC p-value with the BH procedure.

At the beginning of this paper we noted that one must generate at least $B\geq M/\alpha-1$ permutations for each hypothesis to ensure that a rejection is possible when using BH with the permutation p-value \eqref{eq:pperm} regardless of the data of the other hypotheses. However, setting $B=\lceil M/\alpha \rceil-1$ is unsatisfactory as well. First, if the number of rejections is small, we would like to have more samples, as being unlucky with one generated test statistic could already reduce the power of the test. Hence, in practice one often sets $B=\lceil l M/\alpha \rceil-1$ for some small $l$ like $l=5$ or $l=10$. However, if the proportion of rejected hypotheses ends up being large, say $0.5$, we are effectively testing at a level of $\alpha/2$ with BH (see \eqref{eq:threshold_BH}). Drawing $\lceil l M/\alpha \rceil-1$ samples for a large $M$ seems like an excessive waste of computational resources in this case. For these reasons, the following adaptive sampling method would be desirable in order to adjust the number of samples to the BH threshold  $\alpha_{|R|}\coloneqq \alpha m^*/M= \alpha |R|/M\eqref{eq:threshold_BH}$, where $R$ is the rejection set of BH. 

Fix some constant $l$ and draw $B_M\coloneqq \lceil l/\alpha \rceil-1$ permutations for each hypothesis. If all hypotheses can be rejected at level $\alpha$, stop and make those rejections. If not, draw $\lceil lM/(M-1)\alpha\rceil -1-B_M$ further samples, meaning $B_{M-1}\coloneqq\lceil lM/(M-1)\alpha\rceil -1$ samples in total. If at least $(M-1)$ hypotheses can be rejected at level $\alpha(M-1)/M$, stop and make those rejections. If not, draw  $\lceil lM/(M-2)\alpha \rceil-1-B_{M-1}$ further samples and test at level $\alpha(M-2)/M$ and so on. In this way, we would draw $B=\lceil lM/[\max(|R|,1) \alpha] \rceil -1=\lceil l/\max(\alpha_{|R|}, \alpha/M) \rceil-1$ permutations for each hypothesis. Hence, this procedure adapts the number of permutations $B$ to the data-dependent level $\alpha_{|R|}$ of BH.

On first sight, such a proceeding might seem not permissible, since the number of samples of the permutation p-value \eqref{eq:pperm} cannot be adjusted to the data. However, it turns out that this is exactly what BH with anytime-valid BC p-values does (but, as we will see afterwards, in an even more efficient manner), and we have proven its FDR control in Theorem~\ref{theo:PRDS}.

\begin{theorem}\label{theo:magic}
    Let $\pval_t^{\mathrm{avBC},i}$, $i\in \{1,\ldots,M\}$, be the anytime-valid Besag-Clifford p-value \eqref{eq:avBC} for hypothesis $H_0^i$ with parameter $h$ at time $t$ and   $\pval_B^{\mathrm{perm},i}$ be the permutation p-value \eqref{eq:pperm} after $B$ permutations. Then, for any $m,i\in \{1,\ldots,M\}$,
    $$
    \pval_t^{\mathrm{avBC},i}\leq \alpha \frac{m}{M} \text{ for some }t \text{ with } L_t^i\leq h-1 \ \Leftrightarrow  \pval_{B_m}^{\mathrm{perm},i}\leq \alpha \frac{m}{M}, \text{ where } B_m=\left\lceil h \frac{M}{\max(m,1) \alpha} \right\rceil-1.
    $$
   Hence, BH applied on $\pval_{\tau_1}^{\mathrm{avBC},1},\ldots, \pval_{\tau_m}^{\mathrm{avBC},m}$, where $\tau_i$ is as in Theorem~\ref{theo:PRDS}, rejects the largest set $R\subseteq \{1,\ldots,M\}$ such that $\pval_{B_{|R|}}^{\mathrm{perm},i}\leq \alpha \frac{|R|}{M}$ for all $i\in R$. Furthermore, $\tau_i \leq B_{|R|}$ almost surely.
\end{theorem}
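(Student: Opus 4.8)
The plan is to show that \emph{both} events appearing in the stated equivalence are nothing but the event $\{L_{B_m}^i\le h-1\}$, i.e.\ that fewer than $h$ losses occur during the first $B_m$ permutations of $H_0^i$; the assertion about the BH procedure then drops out by feeding this equivalence into the self-consistency of the BH threshold. For the first part I would fix $i$ and $m$, suppress the index $i$, and note that $\{L_t\le h-1\}=\{t<\gamma(h)\}$, so on that range the avBC p-value \eqref{eq:avBC} equals $\pval_t^{\mathrm{avBC}}=h/(t+h-L_t)$. This is nonincreasing in $t$ (item~\ref{bull:A}), and for $t<\gamma(h)$ the denominator $t+h-L_t=h+(t-L_t)$ is at most $\gamma(h)$ with equality at $t=\gamma(h)-1$; hence the smallest value $\pval_t^{\mathrm{avBC}}$ can attain while $L_t\le h-1$ is $h/\gamma(h)$. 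Therefore ``$\pval_t^{\mathrm{avBC}}\le\alpha m/M$ for some $t$ with $L_t\le h-1$'' is equivalent to $h/\gamma(h)\le\alpha m/M$, i.e.\ to $\gamma(h)\ge hM/(m\alpha)$, i.e.\ (integrality of $\gamma(h)$) to $\gamma(h)\ge\lceil hM/(m\alpha)\rceil=B_m+1$, which is just $L_{B_m}\le h-1$. On the other side, $B_m+1=\lceil hM/(m\alpha)\rceil$ and $0<\alpha m/M<1$ give $h\le(\alpha m/M)(B_m+1)<h+1$, so $\pval_{B_m}^{\mathrm{perm}}=(1+L_{B_m})/(B_m+1)\le\alpha m/M$ holds iff $1+L_{B_m}\le(\alpha m/M)(B_m+1)$, iff (integrality) $1+L_{B_m}\le h$, iff $L_{B_m}\le h-1$. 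This establishes the displayed equivalence; the degenerate cases $\gamma(h)\in\{1,\infty\}$ are checked directly.

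Next I would derive the BH characterization. By Lemma~\ref{lemma:fast_stop} applied with $\tau_i'=\gamma_i(h)$, BH run on $\pval_{\tau_1}^{\mathrm{avBC},1},\dots,\pval_{\tau_M}^{\mathrm{avBC},M}$ rejects the same set $R$ as BH run on $\pval_{\gamma_1(h)}^{\mathrm{avBC},1},\dots,\pval_{\gamma_M(h)}^{\mathrm{avBC},M}$, and there $\pval_{\gamma_i(h)}^{\mathrm{avBC},i}=h/\gamma_i(h)$. With $m^\ast\coloneqq|R|$, self-consistency of BH gives $R=\{i:h/\gamma_i(h)\le m^\ast\alpha/M\}$, which by the equivalence above (taken with $m=m^\ast$, so that $h/\gamma_i(h)\le m^\ast\alpha/M\iff L_{B_{m^\ast}}^i\le h-1$) equals $\{i:\pval_{B_{|R|}}^{\mathrm{perm},i}\le\alpha|R|/M\}$; in particular every $i\in R$ satisfies the stated inequality. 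For maximality, if $R'$ satisfies $\pval_{B_{|R'|}}^{\mathrm{perm},i}\le\alpha|R'|/M$ for all $i\in R'$, then by the equivalence each such $i$ has $h/\gamma_i(h)\le|R'|\alpha/M$, so at least $|R'|$ of the values $h/\gamma_j(h)$ lie at or below $|R'|\alpha/M$, whence $m^\ast\ge|R'|$ by the defining maximality of $m^\ast$; thus $R$ is the largest such set.

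For the bound $\tau_i\le B_{|R|}$, I would first record that the running rejection set of Algorithm~\ref{alg:general} stays inside $R$ at every time $t$, since each $\pval_{\min(t,\tau_j)}^{\mathrm{avBC},j}$ dominates $\pval_{\tau_j}^{\mathrm{avBC},j}$ coordinatewise and BH is monotone. If $i\notin R$, then $i$ is never rejected, so $\tau_i=\gamma_i(h)$; maximality of $m^\ast$ forces that no $j\notin R$ satisfies $h/\gamma_j(h)\le(m^\ast+1)\alpha/M$, so $h/\gamma_i(h)>(|R|+1)\alpha/M$ and hence $\gamma_i(h)\le\lceil hM/((|R|+1)\alpha)\rceil-1=B_{|R|+1}\le B_{|R|}$. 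If $i\in R$, then $\gamma_i(h)>B_{|R|}$ by the equivalence, so $L_{B_{|R|}}^i\le h-1$; moreover every $j\in R$ has $\pval_{\min(B_{|R|},\tau_j)}^{\mathrm{avBC},j}\le|R|\alpha/M$, because if $\tau_j\le B_{|R|}$ it was rejected at a threshold $m_{\tau_j}^\ast\alpha/M\le|R|\alpha/M$ (the running rejection set lying in $R$), and if $\tau_j>B_{|R|}$ then $\pval_{B_{|R|}}^{\mathrm{avBC},j}=h/(B_{|R|}+h-L_{B_{|R|}}^j)\le h/(B_{|R|}+1)\le|R|\alpha/M$. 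Hence $m_{B_{|R|}}^\ast\ge|R|$, so $\pval_{\min(B_{|R|},\tau_i)}^{\mathrm{avBC},i}\le|R|\alpha/M\le m_{B_{|R|}}^\ast\alpha/M$ and $H_0^i$ is rejected no later than time $B_{|R|}$, i.e.\ $\tau_i\le B_{|R|}$.

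The arithmetic of Step~1 is routine; the real obstacle is the $\tau_i\le B_{|R|}$ bound for $i\in R$, where one must control the data-dependent BH threshold $m_t^\ast$ built from the \emph{stopped} p-values of all the other hypotheses and argue that enough of them have already fallen below $|R|\alpha/M$ by time $B_{|R|}$. The monotonicity observation (running rejection set $\subseteq R$) is what makes this tractable, and one must also handle the edge case $R=\emptyset$, where the bound for $i\notin R$ relies on the fact that $i\notin R$ already forces $h/\gamma_i(h)>\alpha/M$.
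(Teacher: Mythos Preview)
Your proof is correct and follows the paper's approach. The equivalence in Step~1 is the same computation as the paper's (both reduce each side to $\{L_{B_m}^i\le h-1\}$; you route through $\gamma_i(h)\ge B_m+1$, the paper rearranges $h/(t+h-L_t^i)\le\alpha_m$ directly), and your Step~2 spells out the BH characterization the paper leaves as an implicit ``Hence''.

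Where you diverge is the bound $\tau_i\le B_{|R|}$, which you flag as ``the real obstacle''. The paper handles it in one line: at $t=B_{|R|}$, the hypotheses with $L_t^i\le h-1$ are exactly those in $R$, each with $\pval_t^{\mathrm{avBC},i}\le h/(B_{|R|}+1)\le|R|\alpha/M$, so BH at time $t$ rejects all of them; the remaining hypotheses have $L_t^i\ge h$ and hence already stopped for futility. This is simpler than your case split because the paper's $m_t^\ast$ is defined from the \emph{unstopped} p-values $\pval_t^j$ (see the sentence following \eqref{eq:threshold_BH}), which makes $m_{B_{|R|}}^\ast\ge|R|$ immediate---no need to track the running rejection set or argue that already-rejected $j\in R$ had $m_{\tau_j}^\ast\le|R|$. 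Your more careful treatment is what one would need if $m_t^\ast$ were instead built from Algorithm~\ref{alg:general}'s stopped p-values $\pval_{\min(t,\tau_j)}^j$, and it is correct, but under the paper's convention the extra work is unnecessary.
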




The above theorem particularly shows that we do not need to worry about a loss of power when using the anytime-valid BC p-value instead of the permutation p-value, as it can be interpreted as the permutation p-value applied with the data-adaptive and desirable number of permutations $B_{{|R|}}$. Thus, the anytime-valid BC method ensures that enough samples are drawn for each hypothesis, even if the number of rejections, and thus the BH threshold $\alpha_{|R|}$, is small.

Now, it might seem like the guarantee to draw enough samples must lead to an enormous amount of computational effort for certain data constellations (e.g, if the number of rejections is small) and that the actual realized number of permutations is unpredictable. In fact, Theorem~\ref{theo:magic} only implies an upper bound of $\lceil M h /\alpha \rceil -1$ for the number of permutations for each hypothesis, which can be huge if $M$ is large.  Hence, the users might be worried that they cannot perform the entire procedure in a reasonable amount of time for unfavorable data constellations. However, note that the maximum possible amount of samples $\lceil M h /\alpha \rceil -1$ can only be achieved for a hypothesis if all other hypotheses have already been stopped for futility. In the following proposition, we exploit this to provide a worst case bound for the average number of permutations per hypothesis. 

\begin{proposition}\label{prop:upper_bound}
    Consider applying BH on anytime-valid BC p-values with parameter $h$ and stopping times $\tau_i$, $i\in \{1,\ldots,M\}$, as in Theorem~\ref{theo:PRDS}. Let $\bar{\tau}=\frac{1}{M} \sum_{i=1}^M \tau_i$ be the average number of permutations per hypothesis. Then,
    \begin{align}
        \bar{\tau}\leq \left\lfloor \frac{h}{\alpha} -1 \right\rfloor +  \frac{h}{\alpha} \sum_{t=\left\lfloor \frac{h}{\alpha} \right\rfloor}^{\left\lfloor \frac{M h}{\alpha} -2 \right\rfloor} \frac{1}{(t+1)}  =\mathcal{O}(\log M) \quad \text{surely}. \label{eq:worst_bound}
    \end{align}
\end{proposition}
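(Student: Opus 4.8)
The key structural fact, already delivered by Theorem~\ref{theo:magic}, is that each stopping time satisfies $\tau_i \le B_{|R|} = \lceil hM/(\max(|R|,1)\alpha)\rceil - 1$ almost surely, where $R$ is the final rejection set. But this bound alone is too weak; the point of the proposition is that the \emph{average} $\bar\tau$ is much smaller because the hypothesis that actually runs for a long time can only do so once (almost) all others have already been stopped. So I would argue conditionally on the data (hence on the limiting p-values and on the realized values of all quantities), and bound $\sum_i \tau_i$ by a worst-case combinatorial arrangement of stopping times. Concretely: order the hypotheses by their stopping times $\tau_{(1)}\le \tau_{(2)}\le \dots \le \tau_{(M)}$. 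At the moment the $k$-th hypothesis (in this order) stops, there are still $M-k$ hypotheses active besides it; I want to claim that a hypothesis stopping at time $t$ while $j$ hypotheses (including itself) are still active can have $t$ at most roughly $hM/(j\alpha)$. This is because stopping ``for rejection'' at time $t$ requires $\pval^i_t \le m^*_t \alpha/M$, and $m^*_t$ counts how many p-values are currently below the BH line — which is at most the number of hypotheses that have not yet been stopped for futility, i.e., at most $j$ (the not-yet-rejected ones are all still active). And stopping for futility happens at $\gamma_i(h) < \infty$ with $\pval^{\mathrm{avBC}}_{\gamma_i(h)} = h/\gamma_i(h)$, which is only forced to be large (delaying the stop) when losses come slowly; but if losses come slowly the rejection threshold would have triggered earlier. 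The clean way to package this: whatever the true data, $\tau_{(k)} \le B_{M-k+1} \le \lceil hM/((M-k+1)\alpha)\rceil - 1$, because when the $k$-th hypothesis stops, at least $k-1$ have already stopped, so the effective BH denominator is governed by at most $M-k+1$ live hypotheses, and Theorem~\ref{theo:magic}'s bound $\tau_i \le B_{m}$ applies with $m = M-k+1$ in the worst case.

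**Assembling the bound.**
Given $\tau_{(k)} \le \lceil hM/((M-k+1)\alpha)\rceil - 1$ for each $k$, reindex via $m = M-k+1$ running from $1$ to $M$:
\begin{align*}
M\bar\tau = \sum_{k=1}^M \tau_{(k)} \le \sum_{m=1}^M \left(\left\lceil \frac{hM}{m\alpha}\right\rceil - 1\right) \le \sum_{m=1}^M \frac{hM}{m\alpha} = \frac{hM}{\alpha}\sum_{m=1}^M \frac 1m.
\end{align*}
Dividing by $M$ gives $\bar\tau \le (h/\alpha)\sum_{m=1}^M 1/m = \mathcal O(\log M)$, which already establishes the order claim. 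To get the sharper closed form stated in \eqref{eq:worst_bound}, I would be more careful for the small-$m$ terms: for every $m$ with $hM/(m\alpha) \ge$ something, i.e. for $m$ up to roughly $M$, the ceiling contributes at most one extra, but for the very smallest values of $m$ the term $\lceil hM/(m\alpha)\rceil - 1$ is itself bounded by $hM/(m\alpha)$ whenever $hM/(m\alpha)\ge 1$; and a slightly tighter accounting (separating the $m$ for which the bound is $\le \lfloor h/\alpha\rfloor - 1$, which corresponds to $m$ close to $M$, from the rest, and using $\lceil x\rceil - 1 \le x$ together with $\sum_{m} 1/m = \sum_{t} 1/(t+1)$ after a shift $t = $ (number of permutations index)) reproduces the displayed expression $\lfloor h/\alpha - 1\rfloor + (h/\alpha)\sum_{t=\lfloor h/\alpha\rfloor}^{\lfloor Mh/\alpha - 2\rfloor} 1/(t+1)$. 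The harmonic-sum asymptotics $\sum_{t=a}^{b} 1/(t+1) = \log(b/a) + \mathcal O(1)$ with $a \asymp h/\alpha$, $b \asymp Mh/\alpha$ then give the $\mathcal O(\log M)$ tail, uniformly (``surely'') since nothing probabilistic entered.

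**Where the difficulty lies.**
The routine part is the harmonic-sum bookkeeping. The genuine obstacle is justifying rigorously the claim ``when the $k$-th hypothesis (in stopping order) stops, its stopping time is at most $B_{M-k+1}$'' — i.e., that at the stopping instant, at least $k-1$ of the other hypotheses have genuinely terminated (whether for rejection or futility), so that the relevant BH level cannot be smaller than $\alpha(M-k+1)/M$. The subtlety is that a hypothesis stopped ``for rejection'' still has its p-value frozen at a small value and so continues to contribute to the BH count $m^*_t$ for the others — which only helps, raising the threshold and letting them stop sooner, never later. So the genuinely binding constraint is the number stopped for \emph{futility}: a hypothesis can be delayed only by the futility-stopped ones being few. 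I would handle this by noting that if the $k$-th hypothesis to stop does so for futility at $\gamma_i(h)$, then by Theorem~\ref{theo:magic} its futility time is irrelevant to the bound we need (it stopped because losses accumulated, i.e., $\plim^i$ is not small, and $\gamma_i(h)$ concentrates around $h/\plim^i$ — though for the \emph{surely} statement I cannot use concentration, so instead I note $\gamma_i(h)\le B_m$ is \emph{not} automatic and must be finessed: actually the cleanest route is to observe that futility-stopped hypotheses are stopped before rejection would ever trigger, and to bound them by noting each contributes $\tau_i = \gamma_i(h)$, which could be large — so in fact the correct worst-case picture is the one informally described in the paper after Theorem~\ref{theo:magic}: the long-running rejection-seeking hypothesis can reach $\lceil Mh/\alpha\rceil - 1$ only once \emph{all} others are futility-stopped, the next one only once all-but-one are, etc., and futility-stopped hypotheses themselves are capped because $\gamma_i(h) < \infty$ surely but we need a data-free cap). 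I expect the honest argument to require carefully separating the index set into futility-stopped and rejection-stopped hypotheses and bounding each group's total against the worst-case ordering; this partition-and-order argument, rather than any analytic estimate, is the crux.
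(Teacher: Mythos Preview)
Your ordering approach—bounding $\tau_{(k)} \le B_{M-k+1}$ and summing—is dual to the paper's (which bounds the number of hypotheses still active at each time $t$ and sums over $t$; the two are the same computation via $\sum_i \tau_i = \sum_{t\ge 1}|\{i:\tau_i\ge t\}|$). But you do not actually prove the key inequality. Invoking Theorem~\ref{theo:magic} gives only $\tau_i\le B_{|R|}$ with a \emph{single} $|R|$ independent of $k$, so it cannot deliver the per-rank bound; and your ``Where the difficulty lies'' section correctly flags but does not close the gap for hypotheses that stop for futility—indeed you end by saying a partition-and-order argument is still needed.

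The missing observation dissolves the difficulty in one line and makes any case split unnecessary: at any time $t$, every hypothesis $i$ that is still active (neither futility-stopped nor rejected) has $L_t^i\le h-1$, hence $P_t^{\mathrm{avBC},i}=h/(t+h-L_t^i)\le h/(t+1)$. So if $a$ hypotheses are still active at time $t$ and $h/(t+1)\le a\alpha/M$, then at least $a$ p-values lie at or below $a\alpha/M$, which forces $m_t^*\ge a$, whence all $a$ of them are rejected at time $t$—contradiction. This bounds the number of active hypotheses at time $t$ by $n(t)=\max\{m:h/(t+1)>m\alpha/M\}$, uniformly over all sample paths and regardless of whether any individual hypothesis will eventually stop for futility or for rejection. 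The paper runs exactly this argument in the time-summation form, bounds $n(t)$ above by $\tilde n(t)=\min(M,\,Mh/(\alpha(t+1)))\vee 0$, and computes $\bar\tau\le M^{-1}\sum_{t\ge 1} \tilde n(t)$ to obtain the displayed expression directly. Your per-rank claim $\tau_{(k)}\le B_{M-k+1}$ is in fact an immediate corollary of this same observation (take $t=\tau_{(k)}-1$, $a\ge M-k+1$), so once you have it, either summation order works.
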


Consequently, even if an adversarial chooses the number of losses and when the losses for the hypotheses occur, the number of permutations per hypothesis increases at most logarithmically in the number of hypotheses $M$. This is much better than the lower bound $M/\alpha-1$ we derived for the permutation p-value. For example, if $h=10$, $\alpha=0.1$ and $M=1000$, the above upper bound tells us that we never need to draw more than $789$ samples per hypothesis. And if $M$ increases to one million, the average number of permutations per hypothesis is still smaller than $1500$. Also note that \eqref{eq:worst_bound} is a worst case bound, our simulations and real data experiments show that the average number of samples is usually even much smaller.

To summarize, Theorem~\ref{theo:magic} shows that the anytime-valid BC method makes the same rejections as the permutation p-value with a data-adaptive number of permutations, which ensures that enough samples are drawn to make powerful decisions with BH. Furthermore, Proposition~\ref{prop:upper_bound} implies that the number of samples drawn is always small and increases only slowly in the number of the hypotheses.

\section{A general approach: Anytime-valid permutation testing by betting\label{sec:anytime_betting}}

Up until this point, we have mainly focused on the anytime-valid BC p-value. However, there are infinitely many other anytime-valid permutation p-values one could construct and our general Algorithm~\ref{alg:general} as well as Theorem~\ref{theo:FDR_arbitrary_dep} work for arbitrary anytime-valid p-values. For this reason, we now recap the general construction of anytime-valid permutation tests proposed by  \citet{fischer2024sequential} and show how it can be made suitable for multiple testing. However, it should be noted that due to its simplicity and desirable behavior with the BH procedure (see Section~\ref{sec:bh}), the anytime-valid BC p-value  is our recommended approach for most situations (see Section~\ref{sec:conclusion} for more details). Hence, practically-oriented readers may skip this section.

 \citet{fischer2024sequential} introduced a general approach to anytime-valid permutation testing based on a testing by betting technique \citep{grunwald2020safe, shafer2021testing}. In particular, the anytime-valid BC method is a concrete instance of their algorithm. However, it also allows to construct many other anytime-valid permutation tests. We start with recapping their method for a single hypothesis. At each step $t=1,2,\ldots$, the statistician bets on the outcome of the loss indicator $I_t= \mathbbm{1}\{Y_t\geq Y_0\}$ by specifying a betting function $B_t:\{0,1\}\rightarrow \mathbb{R}_{\geq 0}$. If $I_t=r$, the wealth of the statistician gets multiplied by $B_t(r)$, $r\in \{0,1\}$. Starting with an initial wealth of $W_0=1$, after $t$ rounds of gambling the wealth of the statistician is given by
$$
W_t=\prod_{s=1}^t B_s(I_s),
$$
where the concrete structure of the wealth depends on the betting strategy used. 
The wealth process $(W_t)_{t\in \mathbb{N}}$ is a test martingale (nonnegative martingale with $W_0=1$ under $H_0$) with respect to the filtration generated by the indicators $(\mathcal{I}_t)_{t\in \mathbb{N}}$, if 
$$
B_t(0)\frac{t-L_{t-1}}{t+1}+B_t(1)\frac{1+L_{t-1}}{t+1}=1.
$$
By the optional stopping theorem, this implies that $(W_t)_{t\in \mathbb{N}}$ is an anytime-valid e-value, meaning $\mathbb{E}_{H_0}[W_\tau]\leq 1$ for any stopping time $\tau$. In addition, by Ville's inequality, 
\begin{align}
    \mathbb{P}_{H_0}(\exists t\in \mathbb{N}: W_t\geq 1/\alpha)\leq \alpha. \label{eq:Ville_ineq}
\end{align}
 This shows that a hypothesis could be rejected at level $\alpha$ as soon as the wealth is larger or equal than $1/\alpha$. Ville's inequality is often exploited \citep{ramdas2023game, ramdas2024hypothesis} by defining an anytime-valid p-value as
 \begin{align}
\pval_{t}:=\frac{1}{\max_{s=1,\ldots,t}W_s}. \label{eq:anytime_p-value}
\end{align}
 However, we introduce a more powerful approach in Section~\ref{sec:alpha_calibration}, which is particularly useful for multiple testing. But first, we recap a particular instance of the above testing by betting algorithm.


\subsection{The binomial mixture strategy\label{sec:binomial_mixture}}
As a special instance of their general approach, \citet{fischer2024sequential} proposed the \textit{binomial mixture strategy}, which particularly leads to desirable asymptotic behavior. For a predefined parameter $c\in [0,1]$, the wealth of the binomial mixture strategy is given by 
\begin{align}
\Wbm_t^c=(1-\mathrm{Bin}(L_t;t+1,c))/c, \label{eq:bm}
\end{align}
where $\mathrm{Bin}(L_t;t+1,c)$ is the CDF of a binomial distribution with size parameter $t+1$ and probability $c$. \citet{fischer2024sequential} showed that 
\begin{align}\Wbm_t^c|\{\plim=\ptrue\} \stackrel{a.s.}{\to}\begin{cases}
        1/c, &\text{ if } \ptrue \in [0,c) \\
        0, &\text{ if } \ptrue \in (c,1]
    \end{cases} \label{eq:asymp_bm}
    \end{align}   for $t\to \infty$,  where $\plim$ is the limiting permutation p-value. Therefore, if we choose $c<\alpha$ for some predefined significance level $\alpha$, the binomial mixture strategy rejects almost surely after a finite number of permutations if $\plim <c$. Hence, we can make the loss compared to the limiting permutation p-value arbitrarily small by choosing $c<\alpha$ arbitrarily close to $\alpha$.


   In multiple testing scenarios the level of an individual hypothesis might not be predefined and depends on the rejection or non-rejection of the other hypotheses. In the following section, we will show how we can still use the binomial mixture strategy and all other $\alpha$-dependent betting strategies in multiple testing procedures.  

\subsection{$\alpha$-dependent p-value calibration\label{sec:alpha_calibration}}

Many multiple testing methods take p-values $\pval^1,\ldots,\pval^M$ as inputs and reject $H_0^i$, if $\pval^i\leq \alpha^i$ for some potentially data-dependent individual significance level  $\alpha^i\in [0,1)$. Therefore, we do not know in advance at which individual significance level the hypothesis $H_0^i$ will be tested at. However, most betting strategies derived by \citet{fischer2024sequential}, including the binomial mixture strategy defined in Section~\ref{sec:binomial_mixture}, require the significance level as input and thus it need to be fixed in advance. In this subsection, we show how this can be circumvented by calculating an anytime-valid e-value for each level $\alpha\in [0,1]$ and then calibrating these into an anytime-valid p-value by taking the smallest level $\alpha$ at which the corresponding e-value can reject the null hypothesis.

For each null hypothesis $H_0^i$ we choose a betting strategy as described in Section~\ref{sec:anytime_betting} we would use if we test at level $\alpha' \in (0,1)$. Let $W_{t}^{i,\alpha'}$ be the wealth of the strategy for hypothesis $H_i$ and level $\alpha'$ after $t$ permutations. Going forward, we assume that for all $\alpha_1' < \alpha_2'$, 
\begin{align} \{W_{t}^{i,\alpha_1'}\geq 1/\alpha_1'\} \subseteq \{W_{t}^{i,\alpha_2'}\geq 1/\alpha_2'\}.\label{eq:mon_alpha}\end{align}
This means that if our strategy for level $\alpha_1'$ rejects at level $\alpha_1'$, our strategy for some larger level $\alpha_2'>\alpha_1'$, needs to reject at level $\alpha_2'$ as well. In particular, this is satisfied if we use for each $\alpha'\in (0,1)$ the binomial mixture strategy with parameter $c=b\alpha'$ for some constant $b\in (0,1)$. To see this, note that 
$$\Wbm_t^c \geq 1/\alpha' \Longleftrightarrow \frac{1-\mathrm{Bin}(L_t;t+1,b \alpha')}{b} \geq 1  $$
and $(1-\mathrm{Bin}(\ell;t+1,b \alpha'))/b$ is monotone increasing in $\alpha'$. In the following, we therefore also use the parameter $b$ instead of $c$ to parameterize the binomial mixture strategy in an $\alpha$-independent way.

We define the $p$-value for the $i$-th hypothesis at step $t\in \mathbb{N}$ as \begin{align}\pval_t^i=\inf\{\alpha\in (0,1)|\exists s\in \{1,\ldots,t\}: W_{s}^{i,\alpha}\geq 1/\alpha\}.\label{eq:cont_p-value}\end{align}
For example, the anytime-valid version of the BC method \eqref{eq:avBC} can be obtained using the betting approach with such an $\alpha$-dependent calibration, however, \citet{fischer2024sequential} just used this implicitly and did not write this approach down in its general form. 
 Note that if we use the same strategy for each $\alpha$, then \eqref{eq:mon_alpha} is trivially satisfied and the p-value in \eqref{eq:cont_p-value} reduces to the one in \eqref{eq:anytime_p-value}. 

\begin{proposition}
    If \eqref{eq:mon_alpha} holds, then $(\pval_t^i)_{t\in \mathbb{N}}$ in~\eqref{eq:cont_p-value} is an anytime-valid p-value.
\end{proposition}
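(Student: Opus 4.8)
The plan is to verify the anytime-validity condition \eqref{eq:anytime_p-value_condition} directly for the calibrated p-value $(\pval_t^i)_{t\in\mathbb{N}}$ defined in \eqref{eq:cont_p-value}. Fix a hypothesis index $i$ and drop it from the notation. Fix a target level $\alpha\in(0,1)$ and an arbitrary stopping time $\tau$. We want to show $\mathbb{P}_{H_0}(\pval_\tau\le\alpha)\le\alpha$. The key observation is that the event $\{\pval_\tau\le\alpha\}$ should be (up to a null-set subtlety at the boundary) contained in the event $\{\exists s\le\tau: W_s^\alpha\ge 1/\alpha\}$, and the latter has probability at most $\alpha$ by Ville's inequality \eqref{eq:Ville_ineq} applied to the test martingale $(W_t^\alpha)_{t\in\mathbb{N}}$.

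The main step is to make that containment precise using the monotonicity assumption \eqref{eq:mon_alpha}. First I would note that, by \eqref{eq:mon_alpha}, for any fixed time $s$ the set $\{\alpha'\in(0,1): W_s^{\alpha'}\ge 1/\alpha'\}$ is an up-set in $(0,1)$: if it contains $\alpha_1'$ then it contains every $\alpha_2'>\alpha_1'$. Taking a union over $s\in\{1,\ldots,t\}$ preserves this, so $A_t:=\{\alpha'\in(0,1):\exists s\le t,\ W_s^{\alpha'}\ge 1/\alpha'\}$ is an up-set, and $\pval_t=\inf A_t$. Consequently, for any $\alpha\in(0,1)$ and any $t$, if $\pval_t<\alpha$ then $\alpha\in A_t$, i.e.\ there exists $s\le t$ with $W_s^\alpha\ge 1/\alpha$. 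Hence
\begin{align}
\{\pval_t<\alpha\}\subseteq\{\exists s\le t: W_s^\alpha\ge 1/\alpha\}.\label{eq:containment}
\end{align}
This inclusion holds pathwise and in particular at the random time $\tau$ (with the convention that on $\{\tau=\infty\}$ we read $\{\exists s\le\infty:\ldots\}$), so
$$
\mathbb{P}_{H_0}(\pval_\tau<\alpha)\le\mathbb{P}_{H_0}(\exists s: W_s^\alpha\ge 1/\alpha)\le\alpha,
$$
where the last step is Ville's inequality \eqref{eq:Ville_ineq} for the test martingale $(W_t^\alpha)_t$ associated with the betting strategy at level $\alpha$; note Ville's inequality is uniform in time, so the stopping time $\tau$ is handled automatically and we do not even need optional stopping for this direction.

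It remains to upgrade strict inequality to non-strict, i.e.\ to conclude $\mathbb{P}_{H_0}(\pval_\tau\le\alpha)\le\alpha$. I would do this by a limiting argument: for every $\alpha\in(0,1)$ we have just shown $\mathbb{P}_{H_0}(\pval_\tau<\alpha)\le\alpha$; taking $\alpha\downarrow\alpha_0$ along a sequence and using continuity of probability from below, $\mathbb{P}_{H_0}(\pval_\tau<\alpha_0)=\lim_{\alpha\downarrow\alpha_0}\mathbb{P}_{H_0}(\pval_\tau<\alpha)\le\alpha_0$, which covers all levels in $(0,1)$ for the open event; and since $\{\pval_\tau\le\alpha_0\}=\bigcap_{\alpha>\alpha_0}\{\pval_\tau<\alpha\}$, continuity from above gives $\mathbb{P}_{H_0}(\pval_\tau\le\alpha_0)=\lim_{\alpha\downarrow\alpha_0}\mathbb{P}_{H_0}(\pval_\tau<\alpha)\le\alpha_0$. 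The boundary cases $\alpha\in\{0,1\}$ of \eqref{eq:anytime_p-value_condition} are trivial (at $\alpha=1$ the bound is vacuous; at $\alpha=0$ it follows by taking $\alpha_0\downarrow 0$). The one point that needs a little care — and the only real obstacle — is exactly this handling of the infimum and the strict-versus-weak inequality at the threshold, since $\pval_t=\inf A_t$ need not itself belong to $A_t$; the up-set structure of $A_t$ from \eqref{eq:mon_alpha} is precisely what makes the argument go through cleanly, and everything else is a direct application of Ville's inequality. Finally, the claim that \eqref{eq:cont_p-value} reduces to \eqref{eq:anytime_p-value} when the same strategy is used for every $\alpha$ is immediate: then $W_s^{\alpha'}\ge 1/\alpha' \Leftrightarrow \alpha'\ge 1/W_s$, so $\inf A_t=\min\{1,\ 1/\max_{s\le t}W_s\}$, matching \eqref{eq:anytime_p-value}.
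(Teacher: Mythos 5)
Your proof is correct and follows essentially the same route as the paper's: the monotonicity condition \eqref{eq:mon_alpha} makes $\{\alpha': \exists s\le t,\ W_s^{\alpha'}\ge 1/\alpha'\}$ an up-set, which gives the containment of $\{\pval_\tau\le\alpha\}$ in $\{\exists s\le\tau: W_s^{\tilde\alpha}\ge 1/\tilde\alpha\}$ for any $\tilde\alpha>\alpha$, Ville's inequality bounds the latter by $\tilde\alpha$, and a limit $\tilde\alpha\downarrow\alpha$ handles the infimum at the threshold. The only cosmetic differences are that you take the limit on the probability side via continuity from above of the measure while the paper simply infimizes the bound $\tilde\alpha$ over $\tilde\alpha>\alpha$, and that your intermediate clause invoking ``continuity from below'' for $\mathbb{P}_{H_0}(\pval_\tau<\alpha_0)$ is a small slip (the events $\{\pval_\tau<\alpha\}$ decrease to $\{\pval_\tau\le\alpha_0\}$, not $\{\pval_\tau<\alpha_0\}$) that your subsequent, correct sentence supersedes.
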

The proof follows by combining Ville's inequality \eqref{eq:Ville_ineq} and Condition~\eqref{eq:mon_alpha} to note that for all $\tilde{\alpha}>\alpha$ and every stopping time $\tau_i$ we have 
$$
\mathbb{P}_{H_0^i}(\pval_{\tau_i}^i\leq \alpha) \leq \mathbb{P}_{H_0^i}(\exists t\in \{1,\ldots,\tau_i\}: W_{t}^{i,\tilde{\alpha}}\geq 1/\tilde{\alpha}\}) \leq \tilde{\alpha}.
$$
Letting $\tilde{\alpha}\to \alpha$, proves the assertion.
It might seem computationally challenging to calculate the p-value \eqref{eq:cont_p-value} at each step $t$. However, 
we do not need to evaluate $W_t^{i,\alpha}$ for each $\alpha\in [0,1]$, but in order to compare $\pval_t^i$ with some individual significance level $\alpha^i$ we can just check whether $W_t^{i,\alpha^i}\geq 1/\alpha^i$, saving a lot of computational effort. Sometimes the p-value \eqref{eq:cont_p-value} also has a closed form, as it is the case for the anytime-valid BC method \eqref{eq:avBC}  \citep{fischer2024sequential}.

\subsection{FDR control with BH and general anytime-valid p-values}

In this section, we discuss FDR control when using BH with general anytime-valid p-values. First, note that Lemma~\ref{lemma:fast_stop} does not make any assumptions on the anytime-valid p-values. Hence, stopping for rejection with general anytime-valid p-values in the BH procedure does not inflate the FDR  and it is sufficient to prove PRDS of the anytime-valid p-values at the futility stops.

Thus, it immediately follows that FDR control is guaranteed if the data for each hypothesis is independent and the stop for futility depends only on the data of the corresponding hypothesis. To make it more concrete, one could write such a stop for futility quite generally as \begin{align}\tau_i'=\inf\{t\geq 1: L_t^i>z_t^i\}, \label{eq:stop_futility}\end{align} where the parameters $z_t^i$ are constants. For example, the stopping time of the Besag-Clifford method $\gamma_i(h,B)$ is obtained by $z_1^i=\ldots=z_{B-1}^i=h-1$ and $z_B=-1$.  

\begin{proposition}\label{prop:futility_ind}
    Suppose the permuted samples are generated independently for all hypotheses and the vector of limiting p-values $\boldsymbol{\pval}_{\mathrm{lim}}$ is independent on $I_0$ (see Property~\ref{property:PRDS}). Furthermore, for each $i\in \{1,\ldots,M\}$, let the stopping time  $\tau_i'$ be given by \eqref{eq:stop_futility}. Then the stopped anytime-valid p-values $\boldsymbol{\pval_{\tau'}}=(\pval_{\tau_1'}^1,\ldots,\pval_{\tau_M'}^M)$ are independent on $I_0$. In particular, applying BH to $\boldsymbol{\pval_{\tau}}=(\pval_{\tau_1}^1,\ldots,\pval_{\tau_M}^M)$, where $\tau_i$ is defined as in \eqref{eq:stopping_for_rejection}, controls the FDR.
\end{proposition}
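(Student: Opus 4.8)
The statement has two parts, which I would prove in order: (a) the futility-stopped p-values $\boldsymbol{\pval_{\tau'}}$ are independent on $I_0$; (b) given (a), FDR control follows from Lemma~\ref{lemma:fast_stop}. Part (b) is immediate: independence on $I_0$ implies (weak) PRDS on $I_0$, as noted just after Property~\ref{property:PRDS}, and each $\pval_{\tau_i'}^i$ with $i\in I_0$ is a valid (super-uniform under the null) p-value because $\tau_i'$ is a stopping time for an anytime-valid p-value; hence Lemma~\ref{lemma:fast_stop}, applied with exactly this $\tau_i'$, shows that BH run on $\boldsymbol{\pval_{\tau}}=(\pval_{\tau_1}^1,\ldots,\pval_{\tau_M}^M)$ --- with $\tau_i$ the stop-for-rejection time \eqref{eq:stopping_for_rejection} --- makes the same rejections as BH run on $\boldsymbol{\pval_{\tau'}}$, and the latter controls the FDR (BH controls the FDR under PRDS, in particular under independence).

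So the work is in part (a), and I would base it on two structural observations. First, by the form \eqref{eq:stop_futility} the stopping time $\tau_i'$ depends on hypothesis $H_0^i$ only through the loss-count sequence $(L_t^i)_{t\ge 1}$, equivalently only through the loss-indicator sequence $(I_t^i)_{t\ge 1}$ with $I_t^i=\mathbbm{1}\{Y_t^i\ge Y_0^i\}$; and every anytime-valid permutation p-value in play is likewise a function of these indicators (the anytime-valid BC p-value \eqref{eq:avBC} is a function of $L_t^i$ and $t$, and the betting p-values \eqref{eq:cont_p-value} are functions of the indicators through the wealth processes). Hence $\pval_{\tau_i'}^i=g_i\big((I_t^i)_{t\ge 1}\big)$ for a fixed measurable map $g_i$. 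Second, because permutations are drawn with replacement, the $Y_t^i$ ($t\ge 1$) are i.i.d.\ conditionally on the observed data $X_0^i$, so $(I_t^i)_{t\ge 1}$ is i.i.d.\ $\mathrm{Bernoulli}(\plim^i)$ given $X_0^i$, where $\plim^i=\mathbb{P}(Y_1^i\ge Y_0^i\mid X_0^i)=\lim_t L_t^i/t$ --- this is the representation underlying \citet{fischer2024sequential}. Since the permuted samples are generated independently across hypotheses, conditionally on $(X_0^1,\ldots,X_0^M)$ the sequences $(I_t^i)_{t\ge 1}$, $i=1,\ldots,M$, are mutually independent, the $i$-th being i.i.d.\ $\mathrm{Bernoulli}(\plim^i)$.

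Combining these, conditionally on $(X_0^1,\ldots,X_0^M)$ the variables $\pval_{\tau_1'}^1,\ldots,\pval_{\tau_M'}^M$ are independent, with the conditional law of $\pval_{\tau_i'}^i$ equal to $\mathrm{Law}\,g_i\big(\mathrm{Bernoulli}(\plim^i)^{\otimes\infty}\big)$, a function of the data only through $\plim^i$. Passing to the coarser conditioning on $\boldsymbol{\pval}_{\mathrm{lim}}=(\plim^1,\ldots,\plim^M)$ preserves both facts, so: conditionally on $\boldsymbol{\pval}_{\mathrm{lim}}$, the $\pval_{\tau_i'}^i$ are independent with $\pval_{\tau_i'}^i$ depending on $\boldsymbol{\pval}_{\mathrm{lim}}$ only through $\plim^i$. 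Fixing $i\in I_0$ and testing against bounded measurable functions of $\pval_{\tau_i'}^i$ and of $(\pval_{\tau_j'}^j)_{j\ne i}$, conditioning on $\boldsymbol{\pval}_{\mathrm{lim}}$ turns the expectation into $\mathbb{E}[a_i(\plim^i)\prod_{j\ne i}b_j(\plim^j)]$ for suitable functions $a_i,b_j$; since $\boldsymbol{\pval}_{\mathrm{lim}}$ is independent on $I_0$, $\plim^i$ is independent of $(\plim^j)_{j\ne i}$, so this factors and yields independence of $\pval_{\tau_i'}^i$ from $(\pval_{\tau_j'}^j)_{j\ne i}$, i.e.\ $\boldsymbol{\pval_{\tau'}}$ is independent on $I_0$.

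The main obstacle --- and the place where the hypotheses of the proposition are really used --- is the passage from ``$\pval_{\tau_i'}^i$ is a function of hypothesis $i$'s data'' to ``$\pval_{\tau_i'}^i$ depends, up to independent permutation noise, on the data only through $\plim^i$.'' This needs both the independent-across-hypotheses sampling (for the conditional product structure across $i$) and, crucially, that the futility rule \eqref{eq:stop_futility} has deterministic thresholds $z_t^i$ that do not peek at other hypotheses; a data-adaptive futility rule borrowing information across hypotheses would destroy the factorization and is exactly what this proposition excludes. Once this reduction is in place, the remaining ``condition and factor'' computation is routine.
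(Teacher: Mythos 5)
Your proposal is correct and follows essentially the same route as the paper's (much terser) proof: both reduce to the observation that, conditionally on $\plim^i$, the loss indicators $(I_t^i)_{t\ge 1}$ are i.i.d.\ Bernoulli$(\plim^i)$ (the paper cites De Finetti's theorem, you derive it from with-replacement sampling conditional on the observed data), so the futility-stopped p-values are functions of independent noise plus $\plim^i$ alone, and independence on $I_0$ of $\boldsymbol{\pval}_{\mathrm{lim}}$ transfers to $\boldsymbol{\pval_{\tau'}}$. You simply make explicit the measurability of $\pval_{\tau_i'}^i$ in the indicators, the condition-and-factor step, and the appeal to Lemma~\ref{lemma:fast_stop}, all of which the paper leaves implicit.
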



If there is some positive dependence between the data for the different hypotheses, proving PRDS for the anytime-valid p-values becomes more difficult. The reason for this, and difference to the anytime-valid BC p-value case (see Theorem~\ref{theo:PRDS}), is that $\pval_t^i$ is not only a function of the losses at step $t$ but might depend on the losses of all previous steps $L_1^i,\ldots, L_t^i$. This can lead to situations in which a larger p-value $\pval_t^i$ yields more evidence against a large $\plim^i$, which makes it hard to use that $\boldsymbol{\pval}_{\mathrm{lim}}$ is PRDS for proving that $\boldsymbol{\pval}_t$ is PRDS. 
\begin{example}
    Consider a simple example in which for some $x<x^*$: $\pval_3^i\leq x$ is equivalent to $L_1^i\leq 0 \cup L_2^i\leq 0 \cup L_3^i\leq 0$, which reduces to $L_1^i\leq 0$, and $\pval_3^i\leq x^*$ is equivalent to $L_1^i\leq 0 \cup L_2^i\leq 0 \cup L_3^i\leq 1$, which reduces to $L_1^i\leq 0 \cup L_3^i\leq 1$.  Recall that $L_t^i$ follows a mixture binomial distribution with size parameter $t$ and random probability $\plim^i$. With this, under the assumption that $\plim^i$ follows a uniform distribution (which is true under $H_0^i$), one can check that $\mathbb{P}(\plim^i> 0.9\mid L_1^i\leq 0)=0.01>0.0091=\mathbb{P}(\plim^i> 0.9\mid L_1^i\leq 0 \cup L_3^i\leq 1)$. To understand why this is the case, note that $L_1^i\leq 0$ yields the following set of possible loss sequences $\{(0,1,1),(0,0,1),(0,1,0),(0,0,0)\}$, while $L_1^i\leq 0 \cup L_3^i\leq 1$ only adds $(0,0,1)$ to this set. For this reason, $L_1^i\leq 0 \cup L_3^i\leq 1$ provides more evidence against small values for $\plim^i$, but also more evidence against very large values of $\plim^i$. Therefore, $\plim^i$ is not PRDS on $\pval_t^i$ in general. 
\end{example}

We expect the effect from the above phenomenon, that larger bounds for the number of losses can lead to a lower probability for large values of the limiting permutation p-value, to be very minor and not inflating the FDR. If we choose a betting strategy with nonincreasing wealth in the number of losses (like the binomial mixture strategy \eqref{eq:bm}) and use the stop for futility \eqref{eq:stop_futility}, then the stopped anytime-valid p-values $\pval_{\tau_i'}^i$, $i\in \{1,\ldots,M\}$, are coordinatewise nondecreasing and nonrandom functions of the number of losses $(L_t^i)_{t\in \mathbb{N}}$. Due to De Finetti's theorem, $\boldsymbol{\pval}_{\mathrm{lim}}$ being strongly PRDS implies that $(L_{t_1}^1,\ldots, L_{t_M}^M)$ is PRDS for all $t_1,\ldots,t_M \in \mathbb{N}$ (see also the proof of Theorem~\ref{theo:PRDS}). Hence, in this case $\pval_{\tau_1'}^1,\ldots, \pval_{\tau_M'}^M$ potentially have some kind of positive association as well and therefore we argue that it is reasonable to replace the classical permutation p-values with our anytime-valid ones in the BH procedure. Even if the $z_t^i$ in \eqref{eq:stop_futility} depend in a nonincreasing way on the losses of the other hypotheses, we do not see any reason why this should hurt the PRDS of the stopped p-values. This is supported by our simulations in Section~\ref{sec:sim_FDR}. However, a general proof of FDR control under PRDS of the limiting permutation p-values seems difficult or even impossible.




\section{Simulated experiments\label{sec:sim}}
In this section, we aim to characterize the behavior of applying the BH procedure with the proposed sequential permutation p-values using simulated data. We consider an independent Gaussian mean multiple testing problem. That means, each hypothesis is given by $H_0^i:\mathbb{E}[Y_0^i]=0$, $i\in \{1,\ldots,M\}$, where $Y_0^i$ follows a standard normal distribution under the null hypothesis and a shifted standard normal distribution with mean $\mu_A$ under the alternative. The probability for a hypothesis being false was set to $\pi_A\in [0,1]$ and the generated test statistics $Y_1^i,\ldots, Y_B^i$ were sampled from $N(0,1)$. Note that this matches the setting described in Section~\ref{sec:old_paper}. The R code for reproducing the simulations is available at \url{github.com/fischer23/MC-testing-by-betting/}.

\subsection{Power and average number of permutations\label{sec:sim_power_nperm}}
First, we evaluate the power and average number of permutations per hypothesis using the aggressive strategy, the BC method \citep{besag1991sequential} with $h=10$, the anytime-valid generalization of the BC method with $h=10$ \eqref{eq:avBC}, the binomial mixture strategy \eqref{eq:bm} with $b=0.9$, the AMT algorithm by \citet{zhang2019adaptive} with $\delta=0.01$ and the classic permutation p-value \eqref{eq:pperm} with $B=\numprint{10000}$. The anytime-valid BC method was applied as in Theorem~\ref{theo:PRDS} and the binomial mixture strategy as described in \ifarxiv Appendix~\ref{sec:detail_bm}\else Supplementary material~\ref{sec:detail_bm}\fi, but all methods were stopped after a maximum number of $B=\numprint{10000}$ permutations. In addition to applying the classical permutation p-value with $B=\numprint{10000}$, we also evaluated it for $B=200$ as a reference. It should be noted that the AMT algorithm was applied at an overall level of $\alpha-\delta$ such that FDR control at level $\alpha$ is provided \citep{zhang2019adaptive}.

\begin{figure}[h!]
\centering
\includegraphics[width=0.95\textwidth]{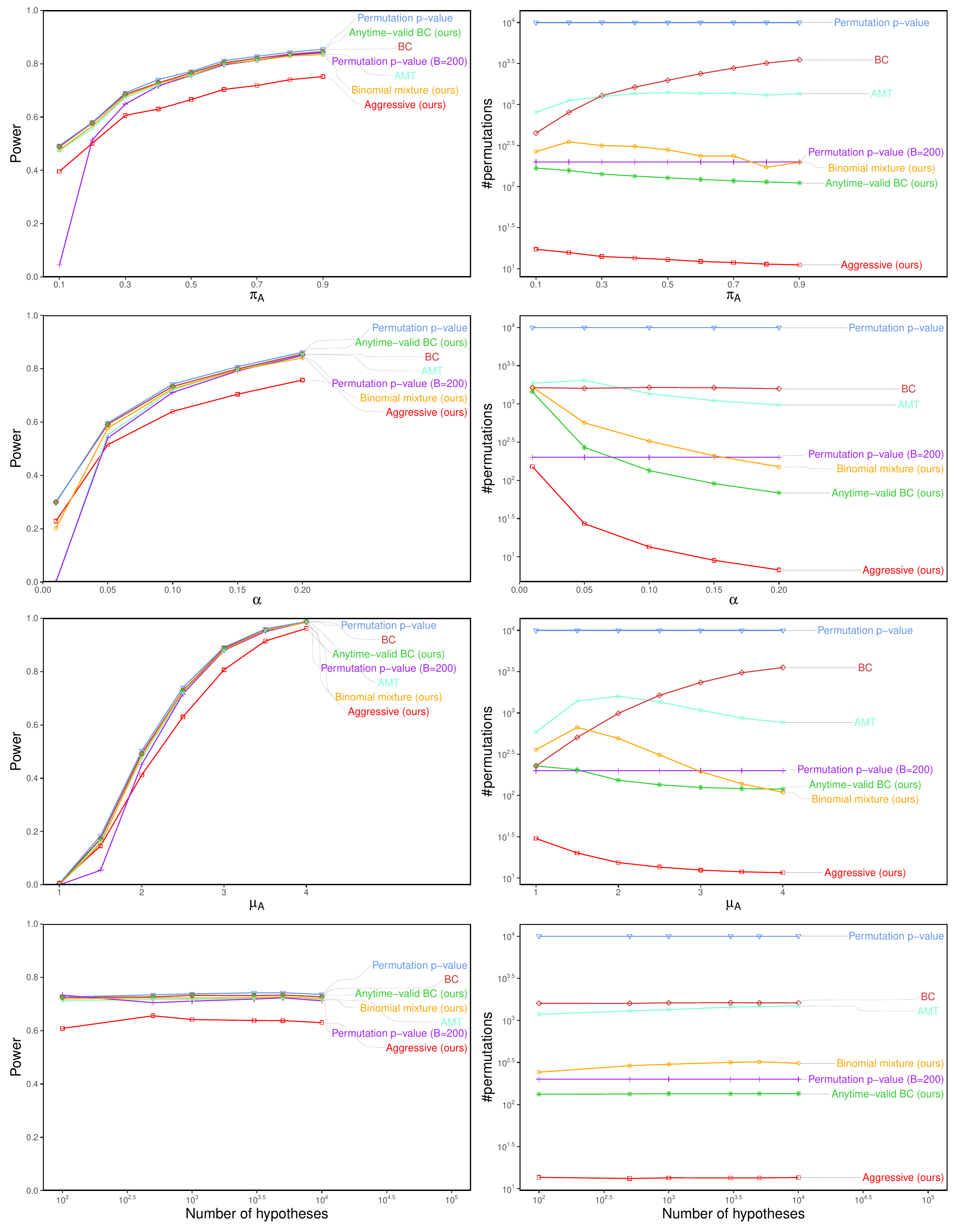}
\caption{Power and average number of permutations per hypothesis for varying simulation parameters obtained by applying the BH procedure to different (sequential) permutation testing strategies.  The anytime-valid BC method and the binomial mixture strategy lead to marginally less power than the classical permutation p-value, while reducing the number of permutations by orders of magnitude. \label{fig:sim_results} }\end{figure}

The results in Figure~\ref{fig:sim_results} were obtained by averaging over $10$ independently simulated trials. Similar to \citet{zhang2019adaptive}, we set the standard values of the simulation parameters to $\pi_A=0.4$, $\mu_A=2.5$, $\alpha=0.1$  and $M=1000$, while one of them was varied in each of the plots. 

It can be seen that the anytime-valid BC method and binomial mixture strategy lead to a similar power as the classical permutation p-value for $B=\numprint{10000}$ permutations, while being able to reduce the number of permutations by orders of magnitude. When the number of permutations of the classical permutation p-value is reduced to $200$, the power reduces substantially, particularly if the proportion of false hypotheses, strength of the alternative  or significance level is small. Since the anytime-valid BC and binomial mixture method also need approximately $200$ permutations on average, this shows that the performance of these sequential methods cannot be accomplished by the permutation p-value with a fixed number of permutations. The anytime-valid BC method always makes the same rejections as the classical BC method and thus has the same power, but reduces the number of permutations significantly.


 The AMT algorithm was outperformed by the anytime-valid BC method and the binomial mixture strategy in terms of power and number of permutations in all considered scenarios. The use of the aggressive strategy can be reasonable when the main goal is to reduce the number of permutations, while a power loss is acceptable. Overall, the anytime-valid BC method performed best in terms of power and number of permutations.

It should be noted that the behavior of the methods does not change much with the number of hypotheses, since the other parameters remain fixed, which implies that $m_t^*/M$, and thus the significance level of BH procedure \eqref{eq:threshold_BH}, remain approximately constant. Lastly, we would like to highlight that the results for all these different constellations of the simulation parameters were obtained with the same hyperparameters for the sequential methods, which thus seem to be universally applicable choices.

\subsection{Early reporting of decisions}
In the previous subsection, we have shown that a lot of permutations can be saved by our sequential strategies, while the power remains similar to the classical permutation p-value. However, reducing the total number of permutations is not the only way of increasing efficiency with sequential permutation tests.
We can also report already made decisions, and particularly rejections, before the entire process has stopped. This might not be important if the entire procedure only takes some hours or one day to run. However, in large-scale trials generating all required permutations and performing the tests can take up to several months or even longer. This can slow down the research process, as the results are crucial for writing a scientific paper or identifying follow up work. Therefore, it would be helpful to be able to already report the unambiguous decisions at an earlier stage of the process. Note that this is not possible with the AMT algorithm by \citet{zhang2019adaptive}, since no decisions can be obtained until the entire process has finished. 

In Figure~\ref{fig:stops_for_rejection} we show the distribution of the rejection times in an arbitrary simulation run using the anytime-valid BC and the binomial mixture strategy in the standard setting described in Section~\ref{sec:sim_power_nperm}. In this case, the binomial mixture strategy rejected $302$ and the anytime-valid BC method $305$ hypotheses, while the former needed a mean number of $462$ and the latter of $327$ permutations to obtain a rejection. However, the distribution of the stopping times looks very different. More than $50\%$ of the rejections made by the binomial mixture strategy were obtained after $147$ permutations and more than $75\%$ after less than $185$ permutations, while the rest is distributed quite broadly up to $\numprint{10000}$ permutations. In contrast, all rejections by the anytime-valid BC procedure were made at time $327$ such that the entire process was stopped at that time. 

 This can be explained by Theorem~\ref{theo:magic}. The rejections made by the anytime-valid BC p-values are the same as with the permutation p-values \eqref{eq:pperm} for a particular data-dependent number of permutations.
 Hence, we often sample until this data-dependent time and then reject all remaining hypotheses such that in many cases (but not always) all rejections are obtained at the same time. 


Consequently, while the anytime-valid BC method needed less permutations in total, the binomial mixture strategy would be more useful if early reporting of the rejections is desired, since the majority of decisions is obtained much faster. In practice, the binomial mixture strategy could be used to start interpreting the results while keeping to generate further permutations in order to increase the total number of discoveries. Indeed, if the data for some of the undecided hypotheses looks promising, sampling could be continued after the first $\numprint{10000}$ permutations to possibly make even more rejections. In this case, one could also consider increasing the precision of the binomial mixture strategy by choosing the parameter $b$ closer to $1$, since a larger average number of permutations might be acceptable when the majority of decisions is obtained fast.

\begin{figure}[h!]
\centering
\includegraphics[width=11cm]{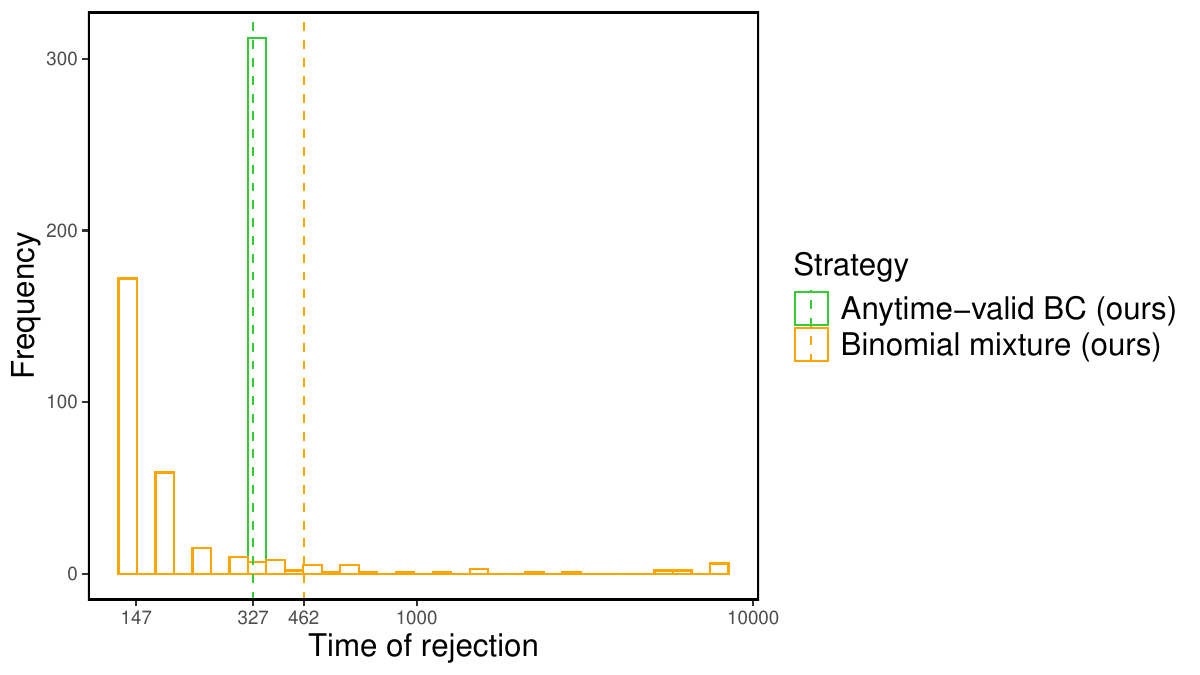}
\caption{Distribution of the rejection times obtained by the BH procedure applied with the anytime-valid BC and the binomial mixture strategy in a standard simulation run according to Section~\ref{sec:sim_power_nperm}.  All discoveries by the anytime-valid BC method were made at time $328$, while the binomial mixture strategy made more than $75\%$ of the rejections after less than $180$ permutations but needed more time on average (dashed vertical line). \label{fig:stops_for_rejection} }\end{figure}

\subsection{FDR control\label{sec:sim_FDR}}
To evaluate the FDR control of BH with our anytime-valid p-values, we generate the vector $(Y_0^1, \ldots, Y_0^M)$ as multivariate normal with mean $0$ and $\mathrm{cov}(Y_0^i,Y_0^j)=\rho$ for $i\neq j$, $\rho \in \{0,0.1,0.3,0.5,0.7,0.9\}$. In this case, the limiting permutation p-values are strongly PRDS \citep{benjamini2001control}. The other simulation parameters and parameters for the permutation testing strategies were chosen as in our standard setting introduced in Section~\ref{sec:sim_power_nperm}. The FDR is obtained by averaging over $1000$ independent trials. All procedures controlled the FDR at the prespecified level $\alpha=0.01$ for all considered parameters $\rho$ (see Figure~\ref{fig:sim_results_PRDS}). This is not surprising, since the BH procedure with the aggressive strategy, the anytime-valid BC method and the classical permutation p-value control it provably and we gave reasonable arguments why it should with the binomial mixture strategy as well. Also note that in this case the BH procedure controls the FDR at level $0.06=0.6\alpha$, since the proportion of false hypotheses was set to $\pi_A=0.4$ \citep{benjamini1995controlling}.

\begin{figure}[h!]
\centering
\includegraphics[width=11cm]{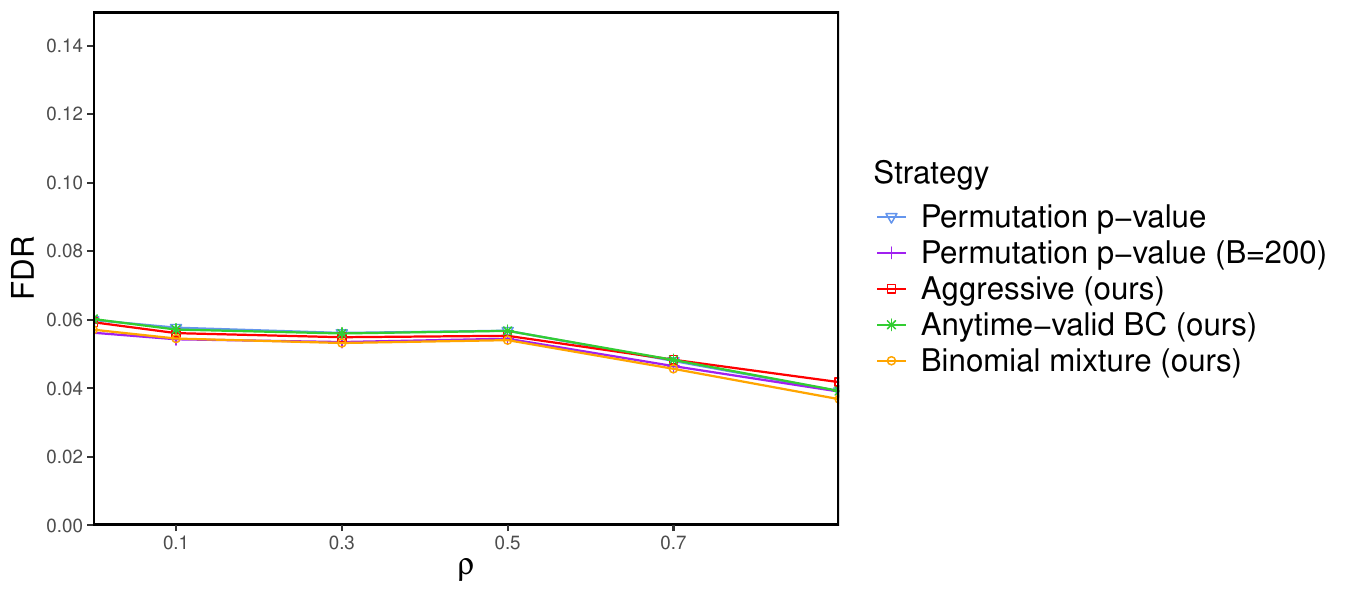}
\caption{FDR for increasing data correlation obtained by applying the BH procedure to different (sequential) permutation testing strategies.  The empirical FDR is below the level $\alpha \pi_0$ for all procedures, where $\pi_0$ is the proportion of true hypotheses\label{fig:sim_results_PRDS}  }\end{figure}

\section{Real data analysis\label{sec:real_data}}

We apply our sequential permutation testing method to analyze an RNA sequencing (RNA-seq) dataset, carefully profiling the running time of our method and comparing it to competitors. 

Typically, RNA-seq data are analyzed using a parametric regression method, such as DESeq2 \cite{Love2014} or Limma \cite{Ritchie2015}. However, in a recent, careful analysis of 13 RNA-seq datasets, \citet{Li2022} found that popular tools for RNA-seq analysis --- including DESeq2 --- did not control type-I error on negative control data (i.e., data devoid of signal). This likely was because the parametric assumptions of these methods broke down. Li~et~al.\ instead recommended use of the Mann-Whitney test \cite{Mann1947} --- a classical, nonparametric, two-sample test --- for RNA-seq data analysis.\footnote{According to \citet{Li2022}, the RNA-seq dataset should contain at least 16 samples for the Mann-Whitney test to have sufficient power.}
	
Li~et~al.\ applied the Mann-Whitney test to analyze several datasets, including a dataset of human adipose (i.e., fat) tissue generated by the Genotype-Tissue Expression (GTEx) project \cite{Lonsdale2013}. Subcutaneous (i.e., directly under-the-skin) adipose tissue and visceral (i.e., deep within-the-body) adipose tissue samples were collected from 581 and 469 subjects, respectively. RNA sequencing was performed on each of these $n = 1,050$ tissue samples to measure the expression level of each of $M = 54,591$ genes. This experimental procedure yielded a tissue-by-gene expression matrix $Z \in \mathbb{N}^{n \times M}$ and a binary vector $W \in \{0,1\}^{n}$ indicating whether a given tissue was subcutaneous ($W_i = 0$) or visceral ($W_i = 1$). Li et al.\ sought to determine which genes were differentially expressed across the subcutaneous and visceral tissue samples. To this end, for each gene $j \in \{1, \dots, M\}$, Li et al.\ performed a Mann-Whitney test to test for association between the gene expressions $Z_{1,j}, \dots, Z_{n,j}$ and the indicators $W_1, \dots, W_n$, yielding p-values $P^1, \dots, P^M$. Finally, Li~et~al.\ subjected the p-values to a BH correction to produce a discovery set. Li~et~al.\ used the asymptotic version of the Mann-Whitney test, as it is more computationally efficient than its finite-sample analogue in high-multiplicity settings. The finite sample Mann-Whitney test generally is calibrated via permutations.

We sought to explore whether our sequential permutation testing procedure would enable computationally efficient application of the finite-sample Mann-Whitney test to the RNA-seq data. We conducted an experiment in which we applied three methods to analyze the data: (i) the asymptotic Mann-Whitney p-value (implemented via the \texttt{wilcox.test()} function in R); (ii) a permutation p-value based on the Mann-Whitney statistic using a fixed number $B = 5M/\alpha$ of permutations across hypotheses; and (iii) the anytime-valid BC p-value based on the Mann-Whitney test statistic with Algorithm~\ref{alg:general}. We implemented our anytime-valid BC method in C++ for maximum speed (see \ifarxiv Appendix~\ref{appn:rna_seq_info} \else Supplementary material~\ref{appn:rna_seq_info} \fi  for more details). We set the tuning parameter $h$ to $15$ in the latter method, and we set the nominal FDR across methods to $\alpha = 0.1$. The classical permutation p-value was slowest, taking over three days and 14 hours to complete. The asymptotic p-value and the anytime-valid permutation p-value, on the other hand, were much faster, running in about two minutes and three and a half minutes, respectively. All three methods rejected a similar proportion of the null hypotheses ($56.4 - 56.9$\% across methods). 

While the asymptotic MW test and finite-sample MW test rejected a similar number of hypotheses, this is not always the case. The asymptotic approximation can break down, leading to a loss of power or an inflated number of false discoveries. In a separate study, \citet{barry2025permuted} evaluated the asymptotic MW test and the finite-sample MW test calibrated via the anytime-valid BC procedure (as proposed in this work) on a bulk RNA-seq dataset comprising 16 samples. The asymptotic MW test made zero discoveries, while the finite-sample MW test made 292 discoveries, including an internal positive control known to be differentially expressed (\citet{barry2025permuted}, Table 6c). Thus, the anytime-valid permutation p-value inherits the strengths of both the asymptotic p-value and the classical permutation p-value. Like the asymptotic p-value, the anytime-valid permutation p-value is fast to compute, and like the classical permutation p-value, the anytime-valid permutation p-value avoids asymptotic assumptions.


\begin{table}\caption{Comparison of the BH procedure applied with the asymptotic Mann-Whitney p-value, the classical permutation p-value based on the Mann-Whitney statistic, and Algorithm~\ref{alg:general} with the anytime-valid BC p-value based on the Mann-Whitney statistic. All three methods were applied to analyze the adipose RNA-seq data. Our proposed anytime-valid permutation p-value was nearly as fast as the asymptotic p-value while avoiding asymptotic assumptions.}

\centering
\begin{tabular}{|c|c|c|}
	\hline
	\textbf{Method} & \textbf{Proportion rejected ($\%$)} & \textbf{Running time} \\
	\hline
	Asymptotic p-value & $56.6\%$  & 2m 4s \\
	\hline
	Classical permutation p-value & $56.4\%$  & 3d 14h 24m  \\
	\hline
	Anytime-valid BC p-value (ours) & $56.9\%$ & 3m 27s \\
	\hline
\end{tabular}
\end{table}

An R/C++ package providing fast implementations of several of the methods introduced in this paper is available at \url{github.com/timothy-barry/adaptiveperm/} and the R code to reproduce this real data analysis is available at \url{github.com/timothy-barry/sequential\_perm\_testing}.

\section{Conclusion\label{sec:conclusion}}
There are several advantages of using the proposed anytime-valid permutation p-values in large-scale multiple testing problems:
\begin{enumerate}
    \item The number of permutations is automatically adapted to the data-dependent significance level of the multiple testing procedure (see Theorem~\ref{theo:magic} for the anytime-valid BC method). In this way, it is ensured that enough samples are drawn to make powerful decisions.
    \item The required number of permutations is reduced by orders of magnitude. In particular, we proved in Proposition~\ref{prop:upper_bound} that the number of permutations required per hypothesis in the worst case scales only logarithmically with the number of hypotheses when using BH with anytime-valid BC p-values (compared to linearly with the permutation p-value). Furthermore,  simulations showed that there is no fixed number of permutations such that the classical permutation p-value performs similar to the sequential permutation p-values. The anytime-valid BC method performed best in terms of power and required number of permutations.
    \item Due to the data-adaptive nature of our methods, there is no need to adapt the parameter choice to the unknown data generating process. We found $h=10$ for the anytime-valid BC method and $b=0.9$ for the binomial mixture strategy to deliver good results in a wide variety of settings. Depending on whether the main focus lies in increasing precision or reducing the number of permutations, $h$ and $b$ could also be chosen larger or smaller, respectively.
    \item Decisions can be reported early such that one can begin follow-up work before the entire testing process is finished. In particular, the binomial mixture strategy was found to be useful for such a proceeding.
\end{enumerate}

 In this paper, we showed how anytime-valid permutation tests can be used with p-value based multiple testing procedures. Another type of multiple testing procedures does not rely on individual p-values and explicitly uses  permutation tests to adapt to the unknown dependence structure of the data. This includes the famous MaxT approach for FWER control by Westfall \& Young \citep{westfall1993resampling}, which is particularly powerful when there is a large positive correlation between the test statistics. In \ifarxiv Appendix~\ref{sec:maxT}\else Supplementary material~\ref{sec:maxT}\fi, we derive sequential versions of such permutation based multiple testing procedures for FWER and simultaneous false discovery proportion control. These permutation based multiple testing procedures are based on the closure principle \citep{marcus1976closed} and therefore only consist of tests at level $\alpha$, which reduces the need for sequential permutation tests. Still, we believe that these can be useful in certain applications.

 \subsection*{Acknowledgments}
The authors thank Leila Wehbe for stimulating practical discussions. LF acknowledges funding by the Deutsche Forschungsgemeinschaft (DFG, German Research Foundation) – Project number 281474342/GRK2224/2. AR was funded by NSF grant DMS-2310718.

\bibliography{main}
\bibliographystyle{plainnat}

\begin{appendix}

\ifarxiv

\else

\renewcommand{\thesection}{S.\arabic{section}}
\renewcommand{\theequation}{S.\arabic{equation}}
\renewcommand{\thetheorem}{S.\arabic{theorem}}
\renewcommand{\thedefinition}{S.\arabic{definition}}
\renewcommand{\theassumption}{S.\arabic{assumption}}
\renewcommand{\thealgocf}{S.\arabic{algocf}}
\renewcommand{\theproperty}{S.\arabic{property}}
\renewcommand{\theproposition}{S.\arabic{proposition}}
\renewcommand{\thecorol}{S.\arabic{corol}}
\renewcommand{\thelemma}{S.\arabic{lemma}}
\renewcommand{\theremark}{S.\arabic{remark}}
\renewcommand{\theexample}{S.\arabic{example}}
\renewcommand{\thefigure}{S.\arabic{figure}}
\renewcommand{\thetable}{S.\arabic{table}}

\setcounter{section}{0}
\setcounter{equation}{0}
\setcounter{theorem}{0}
\setcounter{definition}{0}
\setcounter{assumption}{0}
\setcounter{algocf}{0}
\setcounter{property}{0}
\setcounter{remark}{0}
\setcounter{example}{0}
\setcounter{figure}{0}
\setcounter{table}{0}
\setcounter{proposition}{0}
\setcounter{lemma}{0}
\fi

\section{Sequential permutation based multiple testing procedures\label{sec:maxT}}

 We consider the same setting as in Section~\ref{sec:con_cal}. However, instead of defining the p-values as in \eqref{eq:cont_p-value}, we construct a level-$\alpha$ permutation test $\phi_I$ for each intersection hypothesis $H_0^I=\bigcap_{i\in i} H_0^i$, $I\subseteq \{1,\ldots,M\}$, and then use the closure principle to obtain decisions for individual hypotheses.  There are two versions of the closure principle. The initial version was proposed by \citet{marcus1976closed} who showed that the rejection set defined by 
 \begin{align}
  R=\{i\in \{1,\ldots,M\}: \phi_I=1 \text{ for all } I \text{ with } i\in I\}   
 \end{align}
controls the FWER. However, \citet{goeman2011multiple} showed that the closure principle can also be used for simultaneous true discovery control. In particular, they proved that
\begin{align*}
    \mathbb{P}(\boldsymbol{d}(S)\leq |S\cap I_1| \text{ for all } S\subseteq \{1,\ldots,M\}) \geq 1-\alpha,
\end{align*}
where $\boldsymbol{d}(S)\coloneqq \min\{|S\setminus I| : I\subseteq \mathbb{N}, \phi_I=0 \}$ and $I_1=\{1,\ldots,M\}\setminus I_0$.
In an earlier paper, \citet{genovese2006exceedance} proposed an equivalent approach without using the closure principle. 

Most permutation based multiple testing procedures are based on choosing some combination function $C_I$ for each $I\subseteq M$, defining the intersection tests $\phi_I$ by the classical permutation p-value applied on $C_I((Y_0^i)_{i\in I}), C_I((Y_1^i)_{i\in I}), \ldots, C_I((Y_B^i)_{i\in I})$ and then applying the closure principle with the intersection tests $\phi_I$. To make sure that the $\phi_I$ are indeed level-$\alpha$ tests such that this yields a valid procedure, the following assumption is usually made.


\begin{assumption} \label{assump:closure_prin}
     The vectors of test statistics corresponding to true hypotheses $(Y_0^i)_{i\in I_0}, (Y_1^i)_{i\in I_0}, \ldots,$ are jointly exchangeable.
\end{assumption}

This general approach encompasses many existing permutation based multiple testing methods.
For example, the MaxT approach for FWER control \citep{westfall1993resampling} is obtained by $C_I((Y_j^i)_{i\in I})=\max_{i\in I} Y_j^i$. \citet{vesely2023permutation} focus on procedures providing true discovery control with $C_I((Y_j^i)_{i\in I})=\sum_{i\in I} Y_j^i$. Furthermore, the method by \citet{hemerik2018false}, which uniformly improves the popular significance analysis of microarrays (SAM) procedure \citep{tusher2001significance}, is obtained by $C_I((Y_j^i)_{i\in I})=\# \{i\in I:Y_j^i \in D^i\}$ for some prespecified sets $D^i$.

It is straightforward to derive sequential versions of these multiple testing methods by replacing the classical permutation tests $\phi_I$ by the anytime-valid ones introduced in Sections~\ref{sec:old_paper} and \ref{sec:anytime_betting}. We summarize this general approach in Algorithm~\ref{alg:general_perm_based_closed_testing}.

\ifarxiv
\begin{algorithm}
\caption{Sequential permutation based multiple testing with the closure principle} \label{alg:general_perm_based_closed_testing}
 \hspace*{\algorithmicindent} \textbf{Input:} Combination functions $C_I$, $I\subseteq \mathbb{N}$, and sequences of test statistics $Y_0^i, Y_1^i, \ldots$, $i\in \{1,\ldots,M\}$.\\
 \hspace*{\algorithmicindent} 
 \textbf{Optional output:} Rejection set $R$ for FWER control or function $\boldsymbol{d}$ for simultaneous true discovery guarantee.
\begin{algorithmic}[1]
\For{$I\subseteq \{1,\ldots,M\}$} \\
Test $H_0^I$ by applying a level-$\alpha$ anytime-valid permutation test $\phi_I$ to $C_I((Y_0^i)_{i\in I}),C_I((Y_1^i)_{i\in I}),\ldots $ .
\EndFor
\State $R=\{i\in \{1,\ldots,M\}: \phi_I=1 \text{ for all } I \text{ with } i\in I\}$
\State $\boldsymbol{d}(S)=\min\{|S\setminus I|: I\subseteq \{1,\ldots,M\}, \phi_I=0\}$ for all $S\subseteq \{1,\ldots,M\}$
\State \Return $R$, $\boldsymbol{d}$
\end{algorithmic}
\end{algorithm}
\else 
\begin{algorithm}[H]
\caption{Sequential permutation based multiple testing with the closure principle} \label{alg:general_perm_based_closed_testing}
\KwIn{Combination functions $C_I$, $I\subseteq \mathbb{N}$, and sequences of test statistics $Y_0^i, Y_1^i, \ldots$, $i\in \{1,\ldots,M\}$.}
\KwOut{Rejection set $R$ for FWER control or function $\boldsymbol{d}$ for simultaneous true discovery guarantee.}

\For{$I \subseteq \{1,\ldots,M\}$}{
  Test $H_0^I$ by applying a level-$\alpha$ anytime-valid permutation test $\phi_I$ to 
  $C_I((Y_0^i)_{i\in I}), C_I((Y_1^i)_{i\in I}), \ldots$\;
}

$R \gets \{i\in \{1,\ldots,M\}: \phi_I=1 \text{ for all } I \text{ with } i\in I\}$\;

$\boldsymbol{d}(S) \gets \min\{|S\setminus I|: I\subseteq \{1,\ldots,M\}, \phi_I=0\} \quad \forall S\subseteq \{1,\ldots,M\}$\;

\Return{$R, \boldsymbol{d}$}\;
\end{algorithm}
\fi

For a large number of hypotheses $M$ this general approach is computationally infeasible. For this reason, short cuts have been proposed for specific choices of the combination function $C_I$ \citep{westfall1993resampling, hemerik2018false, vesely2023permutation}. In case of the MaxT approach, the entire closed test can be performed with a maximum number of $M$ intersection tests \citep{westfall1993resampling}. It might be the case that using these short cuts only works for specific betting strategies. For example, the MaxT short cut can be used with all betting strategies with nonincreasing wealth for an increasing number of losses. The detailed procedure for applying the MaxT approach with the binomial mixture strategy is provided in Algorithm~\ref{alg:MaxT}.

\ifarxiv
\begin{algorithm}
\caption{Sequential MaxT approach with the binomial mixture strategy} \label{alg:MaxT}
 \hspace*{\algorithmicindent} \textbf{Input:} Significance level $\alpha\in (0,1)$, parameter $c\in (0,\alpha)$ for the binomial mixture strategy with uniform prior $u_c$, ordered test statistics $Y_0^1\geq ... \geq Y_0^M$ and sequences of generated test statistics $Y_1^1, Y_2^1, \ldots$, $Y_1^2, Y_2^2, \ldots$, $\ldots$, $Y_1^M, Y_2^M, \ldots$ .\\
 \hspace*{\algorithmicindent} 
 \textbf{Output:} Stopping times $\tau_1,\ldots,\tau_M$ and index set of rejections $R\subseteq \{1,\ldots,M\}$. 
\begin{algorithmic}[1]
\State $\tau_i=\infty$ for all $i\in \{1,\ldots,M\}$
\State $\ell^i=0$ for all $i\in \{1,\ldots,M\}$
\State $R=\emptyset$ 
\State $A=\{1,\ldots,M\}$
\For{$t=1,2,...$}
\For{$i\in A$}
\If{$\max\{Y_t^j: j \in \{i,\ldots,M\}\} \geq \max\{Y_0^j: j \in \{i,\ldots,M\}\}$}
\State $\ell^i=\ell^i+1$ 
\If{$(1-\mathrm{Bin}(\ell^i;t+1,c))/c< \alpha$}
\State $\tau_i=\min(\tau_i,t)$
\State $\vdots$
\State $\tau_M=\min(\tau_M,t)$
\State $A=A\setminus \{i,\ldots,M\}$
\State $R=R\setminus \{i,\ldots,M\}$
\EndIf
\ElsIf{$(1-\mathrm{Bin}(\ell^i;t+1,c))/c\geq 1/\alpha$}
\State $\tau_i=t$
\State $R=R\cup \{i\}$
\State $A=A\setminus \{i\}$ 
\EndIf
\EndFor
\If{$A=\emptyset$} 
\State \Return $\tau_1,\ldots, \tau_M$, $R$
\EndIf
\EndFor
\end{algorithmic}
\end{algorithm}
\else 
\begin{algorithm}[H]
\caption{Sequential MaxT approach with the binomial mixture strategy} \label{alg:MaxT}
\KwIn{Significance level $\alpha\in (0,1)$, parameter $c\in (0,\alpha)$ for the binomial mixture strategy with uniform prior $u_c$, ordered test statistics $Y_0^1\geq \cdots \geq Y_0^M$, and sequences of generated test statistics $Y_1^1, Y_2^1, \ldots$, $Y_1^2, Y_2^2, \ldots$, $\ldots$, $Y_1^M, Y_2^M, \ldots$.}
\KwOut{Stopping times $\tau_1,\ldots,\tau_M$ and index set of rejections $R\subseteq \{1,\ldots,M\}$.}

$\tau_i \gets \infty$ for all $i\in \{1,\ldots,M\}$\;
$\ell^i \gets 0$ for all $i\in \{1,\ldots,M\}$\;
$R \gets \emptyset$\;
$A \gets \{1,\ldots,M\}$\;

\For{$t \gets 1,2,\ldots$}{
  \For{$i \in A$}{
    \If{$\max\{Y_t^j: j \in \{i,\ldots,M\}\} \geq \max\{Y_0^j: j \in \{i,\ldots,M\}\}$}{
      $\ell^i \gets \ell^i+1$\;
      \If{$(1-\mathrm{Bin}(\ell^i;t+1,c))/c < \alpha$}{
        $\tau_i \gets \min(\tau_i,t)$\;
        $\vdots$\;
        $\tau_M \gets \min(\tau_M,t)$\;
        $A \gets A \setminus \{i,\ldots,M\}$\;
        $R \gets R \setminus \{i,\ldots,M\}$\;
      }
    }
    \ElseIf{$(1-\mathrm{Bin}(\ell^i;t+1,c))/c \geq 1/\alpha$}{
      $\tau_i \gets t$\;
      $R \gets R \cup \{i\}$\;
      $A \gets A \setminus \{i\}$\;
    }
  }
  \If{$A=\emptyset$}{
    \Return{$\tau_1,\ldots,\tau_M$, $R$}\;
  }
}
\end{algorithm}
\fi



\section{Detailed description of BH with the binomial mixture strategy\label{sec:detail_bm}}

 \citet{fischer2024sequential} proposed to stop for futility with their betting-based approach if the wealth drops below $\alpha$.  In the multiple testing case the level an individual hypothesis is tested at is not fixed in advance. For this reason, we propose to stop sampling for $H_0^i$ at time $t$, if $W_t^{i,\alpha_t^{\max}}<\alpha_t^{\max} $, where $\alpha_t^{\max}:=\alpha (|A_t|+m_t^*)/M$  and $A_t$ is the index set of hypotheses for which the testing process did not stop before step $t$. Hence, $\alpha_t^{\max}$ can be interpreted as the maximum level a hypothesis will be tested at by the BH procedure according to the information up to step $t$. When using the binomial mixture strategy, the combined stopping time of this stop for futility with a stop for rejection is almost surely finite (see \eqref{eq:asymp_bm}). The detailed algorithm of the entire BH procedure using the binomial mixture strategy is illustrated in Algorithm~\ref{alg:detail_BH_bm}. Of course, this is only one possible stop for futility and there are many other reasonable choices. For example, in situations where the sampling process takes several weeks or months it would also be possible to choose the stopping time interactively, meaning to revisit the data at some point and decide for which hypotheses to continue sampling based on the study interests and the evidence gathered so far. 
Our simulations in Section~\ref{sec:sim_FDR} confirm that the binomial mixture strategy with this stop for futility does not inflate the FDR if the limiting permutation p-values are PRDS.

\ifarxiv
\begin{algorithm}
\caption{BH procedure with the the binomial mixture strategy} \label{alg:detail_BH_bm}
 \hspace*{\algorithmicindent} \textbf{Input:} Significance level $\alpha\in (0,1)$, parameter $b\in (0,1)$ for the binomial mixture strategy and sequences of generated test statistics $Y_0^1, Y_1^1, \ldots$, $Y_0^2, Y_1^2, \ldots$, $\ldots$, $Y_0^M, Y_1^M, \ldots$ .\\
 \hspace*{\algorithmicindent} 
 \textbf{Output:} Stopping times $\tau_1,\ldots,\tau_M$ and index set of rejections $R\subseteq \{1,\ldots,M\}$. 
\begin{algorithmic}[1]
\State $m^*=0$ 
\State $r_{i,j}=0$ for all $i,j\in \{1,\ldots,M\}$
\State $\ell^i=0$ for all $i\in \{1,\ldots,M\}$
\State $\mathrm{crit}_i=0$ for all $i\in \{1,\ldots,M\}$
\State $A=\{1,\ldots,M\}$
\For{$t=1,2,...$}
\For{$j=1,\ldots,M$}
\State $\mathrm{crit}_j=(\mathrm{Bin})^{-1}(1-b;t+1,b\alpha j/M)-1$
\EndFor
\For{$i\in A$}
\If{$Y_t^i\geq Y_0^i$}
\State $\ell^i=\ell^i+1$ 
\EndIf
\For{$j=1,\ldots,M$} 
\If{$\ell^i\leq \mathrm{crit}_j$}
\State $r_{i,j}=1$
\EndIf
\EndFor
\EndFor
\State $m^*=\max\{m\in \{1,\ldots,M\}:\sum_{i=1}^m r_{i,m}\geq m\}$
\For{$i\in A$}
\If{$r_{i,m^*}=1$}
\State $R=R\cup \{i\}$
\State $\tau_i=t$
\State $A=A\setminus\{i\}$
\EndIf
\If{$1-\mathrm{Bin}(\ell^i;t+1,b\alpha(|A|+m^*)/M)<b[\alpha(|A|+m^*)/M]^2 $}
\State $\tau_i=t$
\State $A=A\setminus\{i\}$
\EndIf
\EndFor
\If{$A=\emptyset$} 
\State \Return $\tau_1,\ldots,\tau_M$, $R$
\EndIf
\EndFor
\end{algorithmic}
\end{algorithm}
\else 
\begin{algorithm}[H]
\caption{BH procedure with the binomial mixture strategy} \label{alg:detail_BH_bm}
\KwIn{Significance level $\alpha\in (0,1)$, parameter $b\in (0,1)$ for the binomial mixture strategy, and sequences of generated test statistics $Y_0^1, Y_1^1, \ldots$, $Y_0^2, Y_1^2, \ldots$, $\ldots$, $Y_0^M, Y_1^M, \ldots$.}
\KwOut{Stopping times $\tau_1,\ldots,\tau_M$ and index set of rejections $R\subseteq \{1,\ldots,M\}$.}

$m^* \gets 0$\;
$r_{i,j} \gets 0$ for all $i,j\in \{1,\ldots,M\}$\;
$\ell^i \gets 0$ for all $i\in \{1,\ldots,M\}$\;
$\mathrm{crit}_i \gets 0$ for all $i\in \{1,\ldots,M\}$\;
$A \gets \{1,\ldots,M\}$\;

\For{$t \gets 1,2,\ldots$}{
  \For{$j=1,\ldots,M$}{
    $\mathrm{crit}_j \gets (\mathrm{Bin})^{-1}(1-b;t+1,b\alpha j/M)-1$\;
  }
  \For{$i \in A$}{
    \If{$Y_t^i \geq Y_0^i$}{
      $\ell^i \gets \ell^i+1$\;
    }
    \For{$j=1,\ldots,M$}{
      \If{$\ell^i \leq \mathrm{crit}_j$}{
        $r_{i,j} \gets 1$\;
      }
    }
  }
  $m^* \gets \max\{m\in \{1,\ldots,M\}: \sum_{i=1}^m r_{i,m}\geq m\}$\;
  \For{$i \in A$}{
    \If{$r_{i,m^*}=1$}{
      $R \gets R \cup \{i\}$\;
      $\tau_i \gets t$\;
      $A \gets A \setminus \{i\}$\;
    }
    \If{$1-\mathrm{Bin}(\ell^i;t+1,b\alpha(|A|+m^*)/M) < b[\alpha(|A|+m^*)/M]^2$}{
      $\tau_i \gets t$\;
      $A \gets A \setminus \{i\}$\;
    }
  }
  \If{$A=\emptyset$}{
    \Return{$\tau_1,\ldots,\tau_M$, $R$}\;
  }
}
\end{algorithm}
\fi

\section{Additional notes about the RNA-seq data analysis\label{appn:rna_seq_info}}

We implemented the sequential permutation test based on the anytime-valid BC p-value and the BH procedure in an R/C++ package, \texttt{adaptiveperm}.\footnote{ \texttt{github.com/timothy-barry/adaptiveperm}} In the context of the general algorithm for sequential permutation testing (Algorithm~\ref{alg:general}), checking whether to move hypothesis $i$ from the active set $A$ into the rejection set $R$ --- i.e., line 7 --- involves performing a BH correction on all hypotheses contained within the active set $A$. To avoid performing a BH correction at every iteration of the algorithm (which would require a computationally costly sort of the p-values), we instead check whether  the maximum p-value $P^\textrm{max}$ among all hypotheses in the active set can be rejected by the BH procedure:
$$ P^\textrm{max} \coloneqq \max_{i \in A} \{ P^i \} \leq \frac{|A| m}{\alpha}.$$ If the above criterion is satisfied, we move all hypotheses from the active set into the rejection set. This procedure produces the same discovery set as the algorithm that performs a BH correction at each iteration of the loop. To see this, observe that the anytime-valid BC p-value is monotonically decreasing in time, and so if a given hypothesis can be rejected at a time $t$, then the hypothesis will also be rejected at a later time $t^* \geq t$. We implemented several other accelerations to speed the algorithm, including a high-performance Fisher-Yates sampler for permuting the data \cite{Ting2021}.

The subcutaneous tissue data, visceral adipose tissue data, and code to reproduce the RNA-seq analysis are available at the following links:
	\begin{itemize}
		\item \texttt{gtexportal.org/home/tissue/Adipose\_Subcutaneous}
		\item \texttt{gtexportal.org/home/tissue/Adipose\_Visceral\_Omentum}
		\item \texttt{github.com/timothy-barry/sequential\_perm\_testing}
	\end{itemize}

A standard step in RNA-seq analysis is to adjust for differences in library size across samples, where the library size $l_i$ of the $i$th sample is defined as the sum of the expressions across the genes in that sample:
$$l_i = \sum_{j=1}^m Y_{ij}.$$ (This step sometimes is called ``normalization.'') We normalized the data by dividing a given count $Y_{ij}$ by the library size of its sample, i.e.\ $Y_{ij}/l_i$. Li et al.\ \cite{Li2022} instead employed the normalization strategy used by the R package EdgeR \cite{Robinson2010}, which is slightly more sophisticated than the division strategy described above but similar in spirit. We filtered out genes with an expression level of zero across samples, as is standard in RNA-seq analysis \cite{Love2014}.

\section{Omitted proofs\label{appn:proofs}}

\ifarxiv
\begin{proof}[Proof of Lemma~\ref{lemma:fast_stop}]
    Note that $\pval_t^i$, $i\in \{1,\ldots,M\}$, is nonincreasing and $m_t^*$ is nondecreasing in $t$. Hence, if $\pval_t^i\leq m_t^*\alpha /M$ for some $t\in \mathbb{N}$, then $\pval_s^i\leq m_s^*\alpha /M$ for all $s \geq t$.
\end{proof}
\else 
\begin{proof}[of Lemma~\ref{lemma:fast_stop}]
    Note that $\pval_t^i$, $i\in \{1,\ldots,M\}$, is nonincreasing and $m_t^*$ is nondecreasing in $t$. Hence, if $\pval_t^i\leq m_t^*\alpha /M$ for some $t\in \mathbb{N}$, then $\pval_s^i\leq m_s^*\alpha /M$ for all $s \geq t$.
\end{proof}
\fi

\begin{lemma}\label{lemma:PRDS}
Let $Y$, $Z$ and $X$ be real valued random variables on $(\Omega, \mathcal{A}, \mathcal{P})$.  If $Y$ is weakly PRDS on $X$, $Z$ strongly PRDS on $Y$, and $Z$ and $X$ independent conditional on $Y$, then $Z$ is weakly PRDS on $X$.
\end{lemma}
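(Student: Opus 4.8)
The plan is to fix an arbitrary increasing set $D \subseteq \mathbb{R}$ and to show that $x \mapsto \mathbb{P}(Z \in D \mid X \le x)$ is nondecreasing on its domain; since $D$ is arbitrary, this is exactly weak PRDS of $Z$ on $X$. The first step is to introduce the regular conditional probability $g(y) := \mathbb{P}(Z \in D \mid Y = y)$, which exists since all variables are real-valued, and to observe that the strong PRDS assumption of $Z$ on $Y$ lets us select a version of $g$ that is nondecreasing in $y$. In particular $g$ is a bounded (valued in $[0,1]$), measurable, nondecreasing function, and $g(Y) = \mathbb{P}(Z \in D \mid Y)$ almost surely.

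The second step is to use the conditional independence $Z \independent X \mid Y$ to collapse the influence of $X$ on $Z$ through $Y$. Fix $x$ with $\mathbb{P}(X \le x) > 0$. Since $\mathbbm{1}\{X \le x\}$ is $\sigma(X)$-measurable and $Z \independent X \mid Y$, the tower property gives $\mathbb{E}[\mathbbm{1}\{Z \in D\} \mid Y, \mathbbm{1}\{X \le x\}] = \mathbb{E}[\mathbbm{1}\{Z \in D\} \mid Y] = g(Y)$ almost surely; conditioning on the event $\{X \le x\}$ and applying the tower property once more yields
$$
\mathbb{P}(Z \in D \mid X \le x) = \mathbb{E}\big[\mathbb{E}[\mathbbm{1}\{Z \in D\} \mid Y, \mathbbm{1}\{X\le x\}] \,\big|\, X \le x\big] = \mathbb{E}[g(Y) \mid X \le x].
$$

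The third step is to show that $x \mapsto \mathbb{E}[g(Y) \mid X \le x]$ is nondecreasing, using that $Y$ is weakly PRDS on $X$ together with the fact that $g$ is nondecreasing and bounded. This is a layer-cake argument: since $g$ takes values in $[0,1]$, we write $g(Y) = \int_0^1 \mathbbm{1}\{g(Y) > u\}\, du$, and because $g$ is nondecreasing the set $D_u := \{y : g(y) > u\}$ is an increasing subset of $\mathbb{R}$ for every $u$. By Tonelli,
$$
\mathbb{E}[g(Y) \mid X \le x] = \int_0^1 \mathbb{P}(Y \in D_u \mid X \le x)\, du,
$$
and weak PRDS of $Y$ on $X$ makes each integrand nondecreasing in $x$, hence so is the integral. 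Chaining the three displays shows $x \mapsto \mathbb{P}(Z \in D \mid X \le x)$ is nondecreasing, completing the proof.

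I expect the main obstacle to be the measure-theoretic bookkeeping in the second step: committing to a genuinely nondecreasing version of $g$ (so that strong PRDS is actually usable), and rigorously passing from the conditional-independence hypothesis $Z \independent X \mid Y$ to the identity $\mathbb{P}(Z \in D \mid X \le x) = \mathbb{E}[g(Y)\mid X \le x]$ — in particular conditioning on the event $\{X \le x\}$ rather than on $\sigma(X)$, and handling (or excluding) values of $x$ with $\mathbb{P}(X\le x) = 0$. The layer-cake step and the reduction to indicators of increasing sets are routine once $g$ is known to be monotone.
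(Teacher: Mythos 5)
Your proof is correct and follows essentially the same route as the paper's: both reduce $\mathbb{P}(Z\in D\mid X\leq x)$ to $\mathbb{E}[g(Y)\mid X\leq x]$ with $g(y)=\mathbb{P}(Z\in D\mid Y=y)$ via conditional independence, then use monotonicity of $g$ (from strong PRDS of $Z$ on $Y$) together with weak PRDS of $Y$ on $X$. The only cosmetic difference is in the last step: the paper invokes first-order stochastic dominance of the conditional CDFs $F_{Y\mid X\leq x}$, whereas you prove the same inequality explicitly via the layer-cake representation applied to the increasing sets $D_u=\{y:g(y)>u\}$.
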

\begin{proof}
Let $D$ be an increasing set and $x^*\geq x$, then
    \begin{align*}
        \mathbb{P}(Z \in D\mid X\leq x) &= \mathbb{E}_{Y\mid X\leq x}[\mathbb{P}(Z\in D\mid Y, X\leq x)] \\
        &= \mathbb{E}_{Y\mid X\leq x}[\mathbb{P}(Z\in D\mid Y)] \\
        &= \int_{\Omega}  \mathbb{P}(Z\in D\mid Y=y) dF_{Y\mid X\leq x}(y) \\
        &\leq \int_{\Omega}  \mathbb{P}(Z\in D\mid Y=y) dF_{Y\mid X\leq x^*}(y) \\
        &= \mathbb{P}(Z \in D\mid X\leq x^*),
    \end{align*}
    where the inequality follows from the fact that $P(Z\in D\mid Y=y)$ is increasing in $y$ and $F_{Y\mid X\leq x^*}(y)\leq F_{Y\mid X\leq x}(y)$ for all $y$.
\end{proof}

\ifarxiv
\begin{proof}[Proof of Theorem~\ref{theo:PRDS}]
    Let $x,x^*\in [0,1]$ with $x^*\geq x$, $D\subseteq [0,1]^M$ be an increasing set and $i\in I_0$ be arbitrary but fixed. In the following, we just write $\boldsymbol{\pval}$ and  $\pval^i$ instead of $\boldsymbol{\pval}^{\mathrm{avBC}}$ and $\pval^{\mathrm{avBC},i}$, respectively. We want to show that $$\mathbb{P}(
\boldsymbol{\pval}_{\boldsymbol{\gamma}(h)}^{-i}
\in D\mid\pval_{\gamma_i(h)}^i\leq x)\leq \mathbb{P}(\boldsymbol{\pval}_{\boldsymbol{\gamma}(h)}^{-i}\in D\mid \pval_{\gamma_i(h)}^i\leq x^*), $$
where $\boldsymbol{\pval}_{\boldsymbol{\gamma}(h)}^{-i}=
(\pval_{\gamma_1(h)}^1,\ldots,\pval_{\gamma_{i-1}(h)}^{i-1}, \pval_{\gamma_{i+1}(h)}^{i+1},\ldots, \pval_{\gamma_{M}(h)}^{M})$.
    The proof mainly consists of showing the following two claims.
     \begin{enumerate}
         \item $\plim^i$ is weakly PRDS on $\pval_{\gamma_i(h)}^i$.
         \item $\boldsymbol{\pval}_{\boldsymbol{\gamma}(h)}^{-i}$ is strongly PRDS on $\boldsymbol{\pval}_{\mathrm{lim}}$.
     \end{enumerate}
 Since we assumed that $\boldsymbol{\pval}_{\mathrm{lim}}$ is strongly PRDS on $I_0$, the first claim and Lemma~\ref{lemma:PRDS} imply that $\boldsymbol{\pval}_{\mathrm{lim}}$ is weakly PRDS on $\pval_{\gamma_i(h)}^i$. Together with the second claim and the fact that $\boldsymbol{\pval}_{\boldsymbol{\gamma}(h)}^{-i}$ is independent of $\pval_{\gamma_i(h)}^i$ conditional on $\boldsymbol{\pval}_{\mathrm{lim}}$ (since we sample independently for all hypotheses), the final proposition follows by Lemma~\ref{lemma:PRDS}.
     
 We start with proving the first claim. Note that $\pval_{\gamma_i(h)}^i\leq x $ iff $L_{t_x}^i\leq h-1 $,
where $t_x=\lceil h/x \rceil -1$.  In addition, De Finetti's theorem implies that $L_{t_x}^i|\plim^i=p$ follows a binomial distribution with size parameter $t_x$ and probability $p$. Hence, we obtain

\begin{align*}
    \mathbb{P}(\plim^i\leq p^* \mid \pval_{\gamma_i(h)}^i\leq x) &= \frac{\int_0^{p^*} \mathbb{P}( \pval_{\gamma_i(h)}^i\leq x \mid \plim^i= p  )\ dp }{\int_0^{1} \mathbb{P}( \pval_{\gamma_i(h)}^i\leq x \mid \plim^i= p  )\ dp } \\ 
    &= \frac{\sum_{\ell=0}^{h-1} {t_x \choose \ell} \int_0^{p^*} p^\ell (1-p)^{t_x-\ell} \ dp }{\sum_{\ell=0}^{h-1} {t_x \choose \ell} \int_0^{1} p^\ell (1-p)^{t_x-\ell} \ dp  } \\
    &= \frac{\sum_{\ell=0}^{h-1}  (1-\mathrm{Bin}(\ell;t_x+1,p^*))
   }{h
   },
\end{align*}
where $\mathrm{Bin}(\ell;t_x+1,p^*)$ is the CDF of a binomial distribution with size parameter $t_x+1$ and probability $p^*$. Since $t_x$ is decreasing in $x$, $\mathbb{P}(\plim^i\leq p^* \mid \pval_{\gamma_i(h)}^i\leq x)$ is decreasing in $x$ as well.

For the second claim, note that $\pval_{\gamma_j(h)}^j$ is independent of $\boldsymbol{\pval}_{\mathrm{lim}}^{-j} $ and $\boldsymbol{\pval}_{\boldsymbol{\gamma}(h)}^{-j} $ conditional on $\plim^j$ for all $j\in \{1,\ldots, M\}$. Therefore, it only remains to show that $\pval_{\gamma_j(h)}^j$ is strongly PRDS on $\plim^j$. Since $L_{t_u}^j|\plim^j=p$ follows a binomial distribution with probability $p$, it is immediately implied by the fact that the CDF of a binomial distribution is decreasing in its probability parameter.

\end{proof}
\else
\begin{proof}[of Theorem~\ref{theo:PRDS}]
    Let $x,x^*\in [0,1]$ with $x^*\geq x$, $D\subseteq [0,1]^M$ be an increasing set and $i\in I_0$ be arbitrary but fixed. In the following, we just write $\boldsymbol{\pval}$ and  $\pval^i$ instead of $\boldsymbol{\pval}^{\mathrm{avBC}}$ and $\pval^{\mathrm{avBC},i}$, respectively. We want to show that $$\mathbb{P}(
\boldsymbol{\pval}_{\boldsymbol{\gamma}(h)}^{-i}
\in D\mid\pval_{\gamma_i(h)}^i\leq x)\leq \mathbb{P}(\boldsymbol{\pval}_{\boldsymbol{\gamma}(h)}^{-i}\in D\mid \pval_{\gamma_i(h)}^i\leq x^*), $$
where $\boldsymbol{\pval}_{\boldsymbol{\gamma}(h)}^{-i}=
(\pval_{\gamma_1(h)}^1,\ldots,\pval_{\gamma_{i-1}(h)}^{i-1}, \pval_{\gamma_{i+1}(h)}^{i+1},\ldots, \pval_{\gamma_{M}(h)}^{M})$.
    The proof mainly consists of showing the following two claims.
     \begin{enumerate}
         \item $\plim^i$ is weakly PRDS on $\pval_{\gamma_i(h)}^i$.
         \item $\boldsymbol{\pval}_{\boldsymbol{\gamma}(h)}^{-i}$ is strongly PRDS on $\boldsymbol{\pval}_{\mathrm{lim}}$.
     \end{enumerate}
 Since we assumed that $\boldsymbol{\pval}_{\mathrm{lim}}$ is strongly PRDS on $I_0$, the first claim and Lemma~\ref{lemma:PRDS} imply that $\boldsymbol{\pval}_{\mathrm{lim}}$ is weakly PRDS on $\pval_{\gamma_i(h)}^i$. Together with the second claim and the fact that $\boldsymbol{\pval}_{\boldsymbol{\gamma}(h)}^{-i}$ is independent of $\pval_{\gamma_i(h)}^i$ conditional on $\boldsymbol{\pval}_{\mathrm{lim}}$ (since we sample independently for all hypotheses), the final proposition follows by Lemma~\ref{lemma:PRDS}.
     
 We start with proving the first claim. Note that $\pval_{\gamma_i(h)}^i\leq x $ iff $L_{t_x}^i\leq h-1 $,
where $t_x=\lceil h/x \rceil -1$.  In addition, De Finetti's theorem implies that $L_{t_x}^i|\plim^i=p$ follows a binomial distribution with size parameter $t_x$ and probability $p$. Hence, we obtain

\begin{align*}
    \mathbb{P}(\plim^i\leq p^* \mid \pval_{\gamma_i(h)}^i\leq x) &= \frac{\int_0^{p^*} \mathbb{P}( \pval_{\gamma_i(h)}^i\leq x \mid \plim^i= p  )\ dp }{\int_0^{1} \mathbb{P}( \pval_{\gamma_i(h)}^i\leq x \mid \plim^i= p  )\ dp } \\ 
    &= \frac{\sum_{\ell=0}^{h-1} {t_x \choose \ell} \int_0^{p^*} p^\ell (1-p)^{t_x-\ell} \ dp }{\sum_{\ell=0}^{h-1} {t_x \choose \ell} \int_0^{1} p^\ell (1-p)^{t_x-\ell} \ dp  } \\
    &= \frac{\sum_{\ell=0}^{h-1}  (1-\mathrm{Bin}(\ell;t_x+1,p^*))
   }{h
   },
\end{align*}
where $\mathrm{Bin}(\ell;t_x+1,p^*)$ is the CDF of a binomial distribution with size parameter $t_x+1$ and probability $p^*$. Since $t_x$ is decreasing in $x$, $\mathbb{P}(\plim^i\leq p^* \mid \pval_{\gamma_i(h)}^i\leq x)$ is decreasing in $x$ as well.

For the second claim, note that $\pval_{\gamma_j(h)}^j$ is independent of $\boldsymbol{\pval}_{\mathrm{lim}}^{-j} $ and $\boldsymbol{\pval}_{\boldsymbol{\gamma}(h)}^{-j} $ conditional on $\plim^j$ for all $j\in \{1,\ldots, M\}$. Therefore, it only remains to show that $\pval_{\gamma_j(h)}^j$ is strongly PRDS on $\plim^j$. Since $L_{t_u}^j|\plim^j=p$ follows a binomial distribution with probability $p$, it is immediately implied by the fact that the CDF of a binomial distribution is decreasing in its probability parameter.

\end{proof}
\fi

\ifarxiv
\begin{proof}[Proof of Theorem~\ref{theo:magic}]
   Let $m\in \{1,\ldots,M\}$ be arbitrary and set $\alpha_m=\alpha m/M$. Then, for any $i\in \{1,\ldots,M\}$,
    \begin{align*}
         \frac{h}{t+h-L_t^i} \leq \alpha_m \text{ for some } t \text{ with } L_t^i\leq h-1 
        & \ \Leftrightarrow L_t^i \leq t+h-\frac{h}{\alpha_m} \text{ for some } t \text{ with } L_t^i\leq h-1 \\
       & \ \Leftrightarrow  L_{B_{m}}^i \leq h-1 \\
       & \ \Leftrightarrow \frac{L_{B_{m}}^i+1}{B_{m}+1} \leq \alpha_m. \\
    \end{align*}
To see that $\tau_i\leq B_{|R|}$, just note that at time $t=B_{{B_{|R|}}}$ all anytime-valid BC p-values $\pval_{t}^{\mathrm{avBC},i}$, $i\in \{1,\ldots,M\}$, with $L_{t}^i\leq h-1$ reject the null hypothesis (see calculation above) and all anytime-valid BC p-values with $L_{t}^i> h-1$ have already stopped for futility.
\end{proof}
\else 
\begin{proof}[of Theorem~\ref{theo:magic}]
   Let $m\in \{1,\ldots,M\}$ be arbitrary and set $\alpha_m=\alpha m/M$. Then, for any $i\in \{1,\ldots,M\}$,
    \begin{align*}
         \frac{h}{t+h-L_t^i} \leq \alpha_m \text{ for some } t \text{ with } L_t^i\leq h-1 
        & \ \Leftrightarrow L_t^i \leq t+h-\frac{h}{\alpha_m} \text{ for some } t \text{ with } L_t^i\leq h-1 \\
       & \ \Leftrightarrow  L_{B_{m}}^i \leq h-1 \\
       & \ \Leftrightarrow \frac{L_{B_{m}}^i+1}{B_{m}+1} \leq \alpha_m. \\
    \end{align*}
To see that $\tau_i\leq B_{|R|}$, just note that at time $t=B_{{B_{|R|}}}$ all anytime-valid BC p-values $\pval_{t}^{\mathrm{avBC},i}$, $i\in \{1,\ldots,M\}$, with $L_{t}^i\leq h-1$ reject the null hypothesis (see calculation above) and all anytime-valid BC p-values with $L_{t}^i> h-1$ have already stopped for futility.
\end{proof}
\fi

\ifarxiv
\begin{proof}[Proof of Proposition~\ref{prop:upper_bound}]
    Note that for all anytime-valid BC p-values that have not been stopped for futility yet, we have $\pval_t^{\mathrm{avBC},i}\leq h/(t+1)$. Hence, if there are at least $m$ hypotheses that have not been stopped for futility yet and $h/(t+1)\leq \alpha m/M$, we can stop and reject all remaining hypotheses. Therefore, at each time $t$, the number of hypotheses for which the sampling process has not stopped yet can be at most 
    $$
    n(t)=\max\left\{m\in \{0,\ldots,M\}:\frac{h}{t+1} > \frac{m\alpha}{M}\right\}.
    $$
    Now define 
    $$
    \tilde{n}(t)\coloneqq \begin{cases}
        M, & \frac{Mh}{\alpha (t+1)}\geq M\\
        \frac{Mh}{\alpha (t+1)}, & 1<\frac{Mh}{\alpha (t+1)}<M \\
        0, &\frac{Mh}{\alpha (t+1)}\leq 1.
    \end{cases}
    $$
    Then, $\tilde{n}(t)\geq n(t)$ for all $t$. Thus, we can write
    \begin{align*}
        \bar{\tau}&\leq  \frac{1}{M} \sum_{t\in \mathbb{N}} \tilde{n}(t)\\
        &= \frac{1}{M} \left( M \left\lfloor \frac{h}{\alpha} -1\right\rfloor +  \sum_{t=\left\lfloor \frac{h}{\alpha} \right\rfloor}^{\left\lfloor \frac{M h}{\alpha} -2 \right\rfloor} \frac{Mh}{\alpha (t+1)} \right) \\
        &= \left\lfloor \frac{h}{\alpha} -1 \right\rfloor +  \frac{h}{\alpha} \sum_{t=\left\lfloor \frac{h}{\alpha} \right\rfloor}^{\left\lfloor \frac{M h}{\alpha} -2 \right\rfloor} \frac{1}{(t+1)},
    \end{align*}
    proving the claim.
\end{proof}
\else 
\begin{proof}[of Proposition~\ref{prop:upper_bound}]
    Note that for all anytime-valid BC p-values that have not been stopped for futility yet, we have $\pval_t^{\mathrm{avBC},i}\leq h/(t+1)$. Hence, if there are at least $m$ hypotheses that have not been stopped for futility yet and $h/(t+1)\leq \alpha m/M$, we can stop and reject all remaining hypotheses. Therefore, at each time $t$, the number of hypotheses for which the sampling process has not stopped yet can be at most 
    $$
    n(t)=\max\left\{m\in \{0,\ldots,M\}:\frac{h}{t+1} > \frac{m\alpha}{M}\right\}.
    $$
    Now define 
    $$
    \tilde{n}(t)\coloneqq \begin{cases}
        M, & \frac{Mh}{\alpha (t+1)}\geq M\\
        \frac{Mh}{\alpha (t+1)}, & 1<\frac{Mh}{\alpha (t+1)}<M \\
        0, &\frac{Mh}{\alpha (t+1)}\leq 1.
    \end{cases}
    $$
    Then, $\tilde{n}(t)\geq n(t)$ for all $t$. Thus, we can write
    \begin{align*}
        \bar{\tau}&\leq  \frac{1}{M} \sum_{t\in \mathbb{N}} \tilde{n}(t)\\
        &= \frac{1}{M} \left( M \left\lfloor \frac{h}{\alpha} -1\right\rfloor +  \sum_{t=\left\lfloor \frac{h}{\alpha} \right\rfloor}^{\left\lfloor \frac{M h}{\alpha} -2 \right\rfloor} \frac{Mh}{\alpha (t+1)} \right) \\
        &= \left\lfloor \frac{h}{\alpha} -1 \right\rfloor +  \frac{h}{\alpha} \sum_{t=\left\lfloor \frac{h}{\alpha} \right\rfloor}^{\left\lfloor \frac{M h}{\alpha} -2 \right\rfloor} \frac{1}{(t+1)},
    \end{align*}
    proving the claim.
\end{proof}
\fi

\ifarxiv
\begin{proof}[Proof of Proposition~\ref{prop:futility_ind}]
    Due to De Finetti's theorem, the indicators $I_t^i=\mathbbm{1}\{Y_t^i\geq Y_0^i\}$, $t\geq 1$, for a hypothesis $H_0^i$ are i.i.d. conditional on $\plim^i$ and follow a mixture Bernoulli distribution with random probability $\plim^i$. Since $\boldsymbol{\pval}_{\mathrm{lim}}$ is independent on $I_0$,  $\boldsymbol{\pval_{\tau'}}$
 is independent on $I_0$ as well.\end{proof}
 \else 
 \begin{proof}[of Proposition~\ref{prop:futility_ind}]
    Due to De Finetti's theorem, the indicators $I_t^i=\mathbbm{1}\{Y_t^i\geq Y_0^i\}$, $t\geq 1$, for a hypothesis $H_0^i$ are i.i.d. conditional on $\plim^i$ and follow a mixture Bernoulli distribution with random probability $\plim^i$. Since $\boldsymbol{\pval}_{\mathrm{lim}}$ is independent on $I_0$,  $\boldsymbol{\pval_{\tau'}}$
 is independent on $I_0$ as well.\end{proof}
 \fi

\end{appendix}

\end{document}